\Crefname{section}{Sec.}{Secs.}
\newtheorem{observation}{Observation}
\newtheorem{definition}{Definition}
\begin{document}

\title{Entanglement Superactivation in Multiphoton Distillation Networks}

\author{Rui Zhang}
\altaffiliation{These authors contributed equally to this work.}
\affiliation{Hefei National Research Center for Physical Sciences at the Microscale and School of Physical Sciences, University of Science and Technology of China, Hefei 230026, China}
\affiliation{Shanghai Research Center for Quantum Science and CAS Center for Excellence in Quantum Information and Quantum Physics, University of Science and Technology of China, Shanghai 201315, China}

\author{Yue-Yang Fei}
\altaffiliation{These authors contributed equally to this work.}
\affiliation{Hefei National Research Center for Physical Sciences at the Microscale and School of Physical Sciences, University of Science and Technology of China, Hefei 230026, China}
\affiliation{Shanghai Research Center for Quantum Science and CAS Center for Excellence in Quantum Information and Quantum Physics, University of Science and Technology of China, Shanghai 201315, China}

\author{Zhenhuan Liu}
\altaffiliation{These authors contributed equally to this work.}
\affiliation{Center for Quantum Information, Institute for Interdisciplinary Information Sciences, Tsinghua University, Beijing 100084, China}

\author{Xingjian Zhang}
\affiliation{Center for Quantum Information, Institute for Interdisciplinary Information Sciences, Tsinghua University, Beijing 100084, China}

\author{Xu-Fei Yin}
\affiliation{Hefei National Research Center for Physical Sciences at the Microscale and School of Physical Sciences, University of Science and Technology of China, Hefei 230026, China}
\affiliation{Shanghai Research Center for Quantum Science and CAS Center for Excellence in Quantum Information and Quantum Physics, University of Science and Technology of China, Shanghai 201315, China}
\affiliation{Hefei National Laboratory, University of Science and Technology of China, Hefei 230088, China}

\author{Yingqiu Mao}
\affiliation{Hefei National Research Center for Physical Sciences at the Microscale and School of Physical Sciences, University of Science and Technology of China, Hefei 230026, China}
\affiliation{Shanghai Research Center for Quantum Science and CAS Center for Excellence in Quantum Information and Quantum Physics, University of Science and Technology of China, Shanghai 201315, China}
\affiliation{Hefei National Laboratory, University of Science and Technology of China, Hefei 230088, China}

\author{Li Li}
\affiliation{Hefei National Research Center for Physical Sciences at the Microscale and School of Physical Sciences, University of Science and Technology of China, Hefei 230026, China}
\affiliation{Shanghai Research Center for Quantum Science and CAS Center for Excellence in Quantum Information and Quantum Physics, University of Science and Technology of China, Shanghai 201315, China}
\affiliation{Hefei National Laboratory, University of Science and Technology of China, Hefei 230088, China}

\author{Nai-Le Liu}
\affiliation{Hefei National Research Center for Physical Sciences at the Microscale and School of Physical Sciences, University of Science and Technology of China, Hefei 230026, China}
\affiliation{Shanghai Research Center for Quantum Science and CAS Center for Excellence in Quantum Information and Quantum Physics, University of Science and Technology of China, Shanghai 201315, China}
\affiliation{Hefei National Laboratory, University of Science and Technology of China, Hefei 230088, China}

\author{Otfried Gühne}
\affiliation{Naturwissenschaftlich-Technische Fakult\"at, Universit\"at Siegen, Walter-Flex-Stra{\ss}e 3, 57068 Siegen, Germany}

\author{Xiongfeng Ma}
\affiliation{Center for Quantum Information, Institute for Interdisciplinary Information Sciences, Tsinghua University, Beijing 100084, China}
\affiliation{Hefei National Laboratory, University of Science and Technology of China, Hefei 230088, China}

\author{Yu-Ao Chen}
\affiliation{Hefei National Research Center for Physical Sciences at the Microscale and School of Physical Sciences, University of Science and Technology of China, Hefei 230026, China}
\affiliation{Shanghai Research Center for Quantum Science and CAS Center for Excellence in Quantum Information and Quantum Physics, University of Science and Technology of China, Shanghai 201315, China}
\affiliation{Hefei National Laboratory, University of Science and Technology of China, Hefei 230088, China}

\author{Jian-Wei Pan}
\affiliation{Hefei National Research Center for Physical Sciences at the Microscale and School of Physical Sciences, University of Science and Technology of China, Hefei 230026, China}
\affiliation{Shanghai Research Center for Quantum Science and CAS Center for Excellence in Quantum Information and Quantum Physics, University of Science and Technology of China, Shanghai 201315, China}
\affiliation{Hefei National Laboratory, University of Science and Technology of China, Hefei 230088, China}

\begin{abstract}
	In quantum networks, after passing through noisy channels or information processing, residual states may lack sufficient entanglement for further tasks, yet 
    they may retain hidden quantum resources that 
    can be recycled. Efficiently recycling these states to extract entanglement resources such as genuine multipartite entanglement or Einstein-Podolsky-Rosen pairs is essential for optimizing network performance. 
	Here, we develop a tripartite entanglement distillation scheme using an eight-photon quantum platform, demonstrating entanglement superactivation phenomena which are unique to multipartite systems. 
	We successfully generate a three-photon genuinely entangled state from two bi-separable states via local operations and classical communication, demonstrating superactivation of genuine multipartite entanglement.
	Furthermore, we extend our scheme to generate a three-photon state capable of extracting an Einstein-Podolsky-Rosen pair from two initial states lacking this capability, revealing a previously unobserved entanglement superactivation phenomenon. 
	Our methods and findings offer not only practical applications 
    for quantum networks, but also lead to a deeper understanding of multipartite entanglement structures.
\end{abstract}

\maketitle

\noindent \emph{Introduction.---}
In the future, quantum devices are expected to form a global network, connecting remote clients through entanglement~\cite{Simon2017network,Storz2023loophole,sahu2023micro}, likely facilitated by photon transmission~\cite{pan2012multiphoton}. Clients harness entanglement to carry out various information processing tasks, such as quantum key distribution~\cite{bennett1984quantum,ekert1991quantum}, quantum teleportation~\cite{bennett1993teleporting}, and blind quantum computation~\cite{barz2012blind}. 
Upon transmission through noisy quantum channels or following the completion of information processing procedures, however, the residual states may lack the necessary quantum entanglement for subsequent quantum tasks.
Effectively recycling and converting these residual states into valuable quantum entanglement resources is key to enhancing network efficiency.

Entanglement distillation, which transforms weakly entangled states 
into highly entangled ones, is a typical recycling technique for quantum 
resources. While bipartite distillation protocols have been studied 
from theoretical~\cite{bennett1996purification, DeutschPRL1996, Pan2001purification} and experimental
~\cite{Pan2003purification,chen2017experimental, reichle2006experimental,hu2021long, ecker2021experimental, 
SalartPRL2010, Hanson2017entanglement, YanPRL2022} 
points of view, the study of multipartite protocols has, to the 
best of our knowledge, been carried out in theory only \cite{MuraoPRA1998, huber2011purification, DurPRL2003Multiparticle,MiyakePRL2005, DurPRA2006, OtfriedPRA2023, ahmad2025distillation}. Interestingly, if one aims
at distilling genuine multipartite entanglement (GME), the 
phenomenon of superactivation can help. This phenomenon describes the 
fact that starting from a given quantum state without GME, multiple copies of it may 
retain GME~\cite{huber2011purification,Yamasaki2022activation}. For bipartite
systems, similar phenomena only exist for Bell nonlocality~\cite{SuperactivationPRL2012} and quantum steering~\cite{SuperactivationPRA2016}. So, for multiparticle systems, 
entanglement superactivation promises one to harvest something useful out 
of ``nothing'', thereby enhancing the overall efficiency of quantum 
networks.

An effective multipartite entanglement distillation scheme demands sophisticated multiphoton entanglement preparation and manipulation techniques.
Currently, the spontaneous parametric down-conversion (SPDC) process stands out as the most mature technique for preparing multiphoton entangled states~\cite{pan2012multiphoton}.
However, its inherent probabilistic nature results in double-pair emissions, which introduce spurious contributions 
to experimental results and pose a notorious obstacle to faithfully  demonstrate entanglement distillation protocols~\cite{Pan2003purification,chen2017experimental}.

In this work, by delicately designing the entanglement distillation network into a crossed structure, we realize a tripartite entanglement distillation scheme which can filter out unwanted double-pair emission noises. 
Applying this scheme, we experimentally realize multipartite entanglement distillation for the first time and demonstrate two types of entanglement superactivation phenomena as sketched in Fig.~\ref{fig:overview}. 
First, we experimentally generate a three-photon genuinely tripartite entangled state from two copies of three-photon states without GME, demonstrating GME superactivation. 
Second, we define the \emph{stochastic localizable entanglement} (SLE) to describe the ability to extract local entanglement from given multipartite states~\cite{Chitambar2014LOCC,popp2005localizable} and theoretically predict the SLE superactivation phenomenon.
Furthermore, we show that there exist states whose GME can be activated while their SLE remains non-activable, highlighting the fundamental physical implications of SLE superactivation. 
By applying an additional single-photon measurement to the tripartite distillation scheme, we experimentally observe the SLE superactivation phenomenon.

\begin{figure}[t]
	\centering
	\includegraphics[width=1 \linewidth]{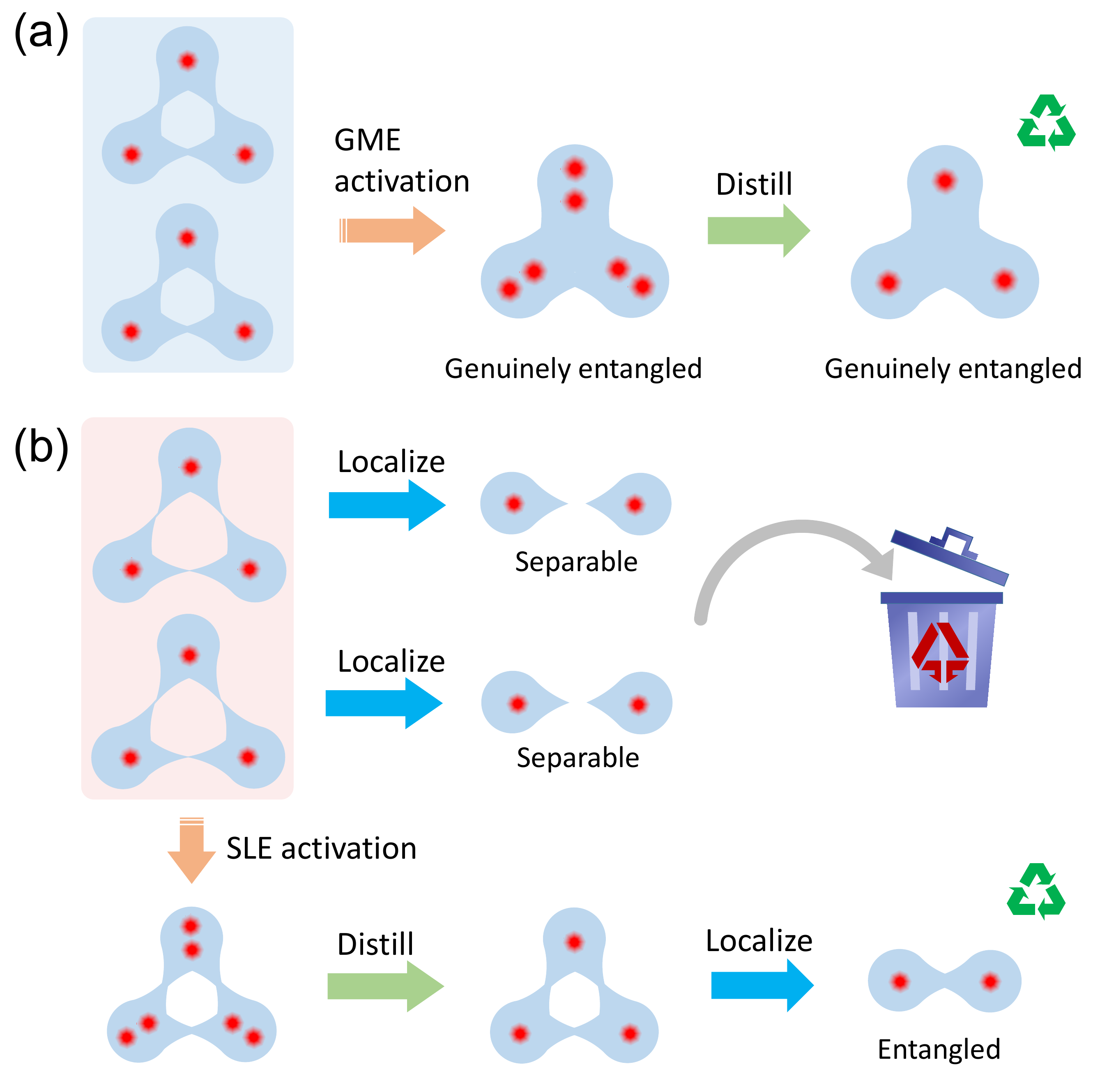}
	\caption{Schematic description of two types of entanglement superactivation in tripartite systems. Red dots represent photons.
	(a) GME superactivation. 
	GME can be activated by collecting two biseparable tripartite states to form a new tripartite state (orange arrow). 
	One can then perform tripartite distillation (green arrow) to concentrate the resources.   
	(b) SLE superactivation. SLE is the property of some quantum 
    states that bipartite entanglement can be localized between
    two parties with any SLOCC protocol.
	For some tripartite states, any localization protocol results in bipartite separable states shared between 
    two subsystems (blue arrow) and the resulting states 
    cannot be used to distill entanglement.
	When two copies of such tripartite states are collected, SLE can be activated and it becomes possible to create a bipartite entangled state. 
	}
	\label{fig:overview}
\end{figure}

\begin{figure*}[htbp!]
	\centering
	\includegraphics[width=1 \linewidth]{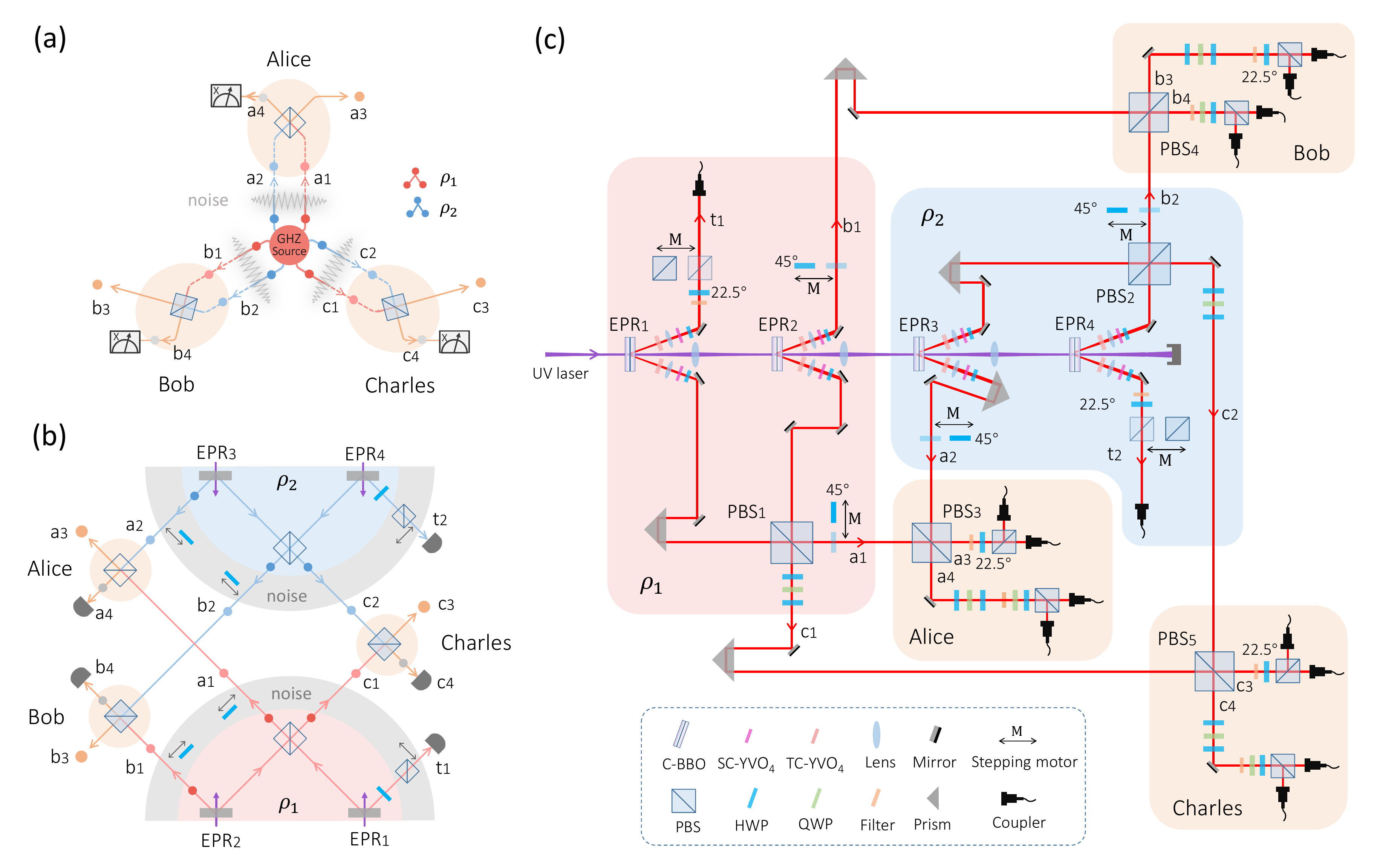}
	\caption{Schematic description of the tripartite distillation scheme and the experimental setup. 
		(a) Schematic of the tripartite entanglement distillation scheme.
		(b) Schematic of the noisy GHZ state generation procedure and entanglement distillation network.
		(c) Detailed experimental setup. 
		Each to-be-distilled noisy GHZ state is independently prepared with two EPR sources.
		Three stepping motors are employed to drive a PBS and two HWPs to prepare different components of the noisy GHZ state.
		PBS$_3$, PBS$_4$, and PBS$_5$ are employed to realize the distillation operation by overlapping photons from two to-be-distilled states.
		The QWP-HWP-QWP combination after PBS is used to compensate for phase drift between two photons after interference on PBS. 
		C-BBO: combination of $\beta$-barium borate crystals; 
		SC-YVO$_4$: YVO$_4$ crystal for spatial compensation;
		TC-YVO$_4$: YVO$_4$ crystal for temporal compensation;
		HWP: half-wave plate; 
		QWP: quarter-wave plate.
	}
	\label{Fig:setup}
\end{figure*}

\noindent\emph{Entanglement superactivation.---}
GME represents one of the most valuable resources in quantum networks.
In short, a tripartite state is biseparable, if it can be decomposed as $\rho=
p_1 \rho^{\rm sep}_{A|BC}
+ p_2 \rho^{\rm sep}_{B|AC}
+ p_3 \rho^{\rm sep}_{C|AB}$, 
where the $p_i$ form a probability distribution and the 
$\rho^{\rm sep}_{X|YZ}$ are separable for the respective bipartition. 
States which are not biseparable have GME \cite{gunhe2009entanglement}. This property cannot be created via stochastic local operations assisted by classical communication (SLOCC) and possesses the property of GME superactivation, as illustrated in Fig.~\ref{fig:overview}(a).
In addition to GME, a multipartite state can possess other types of resources, like the ability to extract Einstein-Podolsky-Rosen (EPR) pairs, which is relevant for connecting different nodes in quantum networks.
Such ability can be formalized as a basic property for an arbitrary multipartite state defined as follows. 
\begin{definition}[Stochastic Localizable Entanglement]\label{def:SLE}
	A multipartite state $\rho$ possesses stochastic localizable entanglement (SLE) on subsystems $A$ and $B$ if and only if there exists a protocol with local operations, classical communication and post-selection on measurement results, (that is, a SLOCC operation) that transforms $\rho$ into a bipartite entangled state shared between $A$ and $B$.
\end{definition}
SLE extends concepts of entanglement of assistance~\cite{divincenzo1999assistance} and localizable entanglement~\cite{verstraete2004le,popp2005localizable}. 
A direct corollary of this definition is that, similar to GME, SLE cannot be created from any state without SLE through SLOCC, as its definition already considers all possible SLOCC protocols.
In the Section II.A of Supplementary Material~\cite{Supplemental}, we derive a criterion to reduce the search over all SLOCC protocols to those involving projective measurements only.
Based on this criterion, we theoretically prove that:
	\begin{observation}[SLE Superactivation]
		SLE can be activated by collecting multiple copies of a state without SLE. 
	\end{observation}
	\noindent As illustrated in Fig.~\ref{fig:overview}(b), there exists some state from which any SLOCC operation results in a separable bipartite state. 
	However, by collecting two copies of this state, it becomes possible to obtain a bipartite entangled state through entanglement distillation and localization, both of which are just some SLOCC protocols.

Although GME and SLE share many similar properties, they represent two distinct notions of multipartite entanglement without any inclusion relation. 
In Section IV of Supplementary Material~\cite{Supplemental}, we prove this by demonstrating the existence of states that exhibit only GME (but no SLE) or only SLE (and no GME)~\cite{gunhe2016emergent}. 
Furthermore, assuming the validity of the PPT square conjecture~\cite{ruskai2012operator,chen2019pptsquare}, we show that the superactivation of SLE on all bipartite subsystems, like the subsystems $A$ and $B$ in Definition~\ref{def:SLE}, is strictly harder than that of GME. 
While it has been proven that any multipartite state that is not separable across any fixed bipartition can have its GME activated by collecting sufficiently many copies~\cite{Palazuelos2022genuinemultipartite}, usage of the PPT square conjecture allows to identify states that are not separable in any fixed bipartition but whose SLE cannot be activated with any finite number of copies. 
Taken together, these results highlight the profound physical implications and the distinctive nature of SLE and its superactivation, leading to an intriguing observation: \emph{local entanglement is harder to activate than the global entanglement}.

Similar to quantum statistics in entanglement concentration~\cite{QuantumStatisticsPRL2002}, both GME and SLE superactivation can be realized simply by collecting multiple copies of the target state, thereby directly generating the activated entanglement in the multi-qubit system without requiring any further operations~\cite{huber2011purification, Yamasaki2022activation,weinbrenner2024superactivation}.
In practical applications, to harness the activated entanglement, it is essential to concentrate it into a state with fewer qubits, as shown in Fig.~\ref{fig:overview}. 
This is, in fact, not always possible~\cite{weinbrenner2024superactivation}, so for practical purposes, an efficient entanglement distillation protocol, as a crucial technique for extracting entanglement from multi-qubit states, is required.
On the other hand, note that both SLE and GME cannot be created from SLOCC operations and that entanglement distillation is comprised solely of SLOCC operations.
Therefore, entanglement distillation is also a way to experimentally verify the existence of entanglement superactivation phenomena.

\noindent \emph{Tripartite entanglement distillation scheme.---}
In practical quantum networks, the Greenberger-Horne-Zeilinger (GHZ) state, denoted as $\ket{\mathrm{GHZ}_3}=(\ket{000}+\ket{111})/{\sqrt{2}}$, is a vital resource for quantum network tasks.
	However, when transmitted through noisy channels, the distributed GHZ state may degrade into a mixed state, which would lack sufficient entanglement to perform the intended tasks.
Here, we consider the following typical noisy GHZ state,
\begin{equation}\label{eq:Werner}
	\rho_\mathrm{noise}=p\ketbra{\mathrm{GHZ}_3}{\mathrm{GHZ}_3}+(1-p)\frac{\mathbb{I}_3}{8},
\end{equation}
where $\frac{\mathbb{I}_3}{8}$ is the three-qubit maximally mixed state.
This noisy GHZ state contains various real-world potential errors, including bit-flip and phase-flip errors, and the parameter $p$ clearly quantifies the noise level of quantum channels. 
Theoretically, the parameter $p$ also characterizes entanglement properties of noisy GHZ states, which would benefit the experimental demonstration of entanglement superactivation~\cite{ollivier2001discord,hashemi2012genuine,ma2011gme}. 

In the experiment, by encoding $\ket{0}$ and $\ket{1}$ in the horizontal and vertical polarizations of photons, $\ket{H}$ and $\ket{V}$, the entanglement distillation can be achieved using the parity check function of polarizing beam splitter (PBS, transmitting $\ket{H}$ and reflecting $\ket{V}$)~\cite{Pan2001purification,Pan2003purification}.
As shown in Fig.~\ref{Fig:setup}(a), after distributing two copies of 
the initial state, labeled by a$_1$-b$_1$-c$_1$ and a$_2$-b$_2$-c$_2$, to three clients---Alice, Bob, and Charles---each introduces their photons to a PBS, retaining only events that both two photons emit from both outputs of the PBS. 
Each client then performs the Pauli-$X$ measurement on one of the two output photons, such as a$_4$-b$_4$-c$_4$, and compares measurement outcomes with others.
If an even number of states $\ket{-}=(\ket{H}-\ket{V})/\sqrt{2}$ is registered, they keep the remaining photons, a$_3$-b$_3$-c$_3$; otherwise, one of them applies the phase-flip operation to the remaining photon. 
As a result, the distilled state can acquire a higher entanglement than the initial states.

Figure~\ref{Fig:setup}(b) outlines the noisy GHZ state generation procedure and entanglement distillation network in the linear optical platform.
Each initial noisy GHZ state is prepared with two EPR photon pairs generated through the SPDC process. 
Taking the noisy GHZ state $\rho_{1}$ as an example, two photons from EPR$_1$ and EPR$_2$ are overlapped on PBS$_1$ to prepare a four-qubit GHZ state $\ket{\mathrm{GHZ}_4}=(\ket{H_{a_1}H_{b_1}H_{c_1}H_{t_1}}+\ket{V_{a_1}V_{b_1}V_{c_1}V_{t_1}})/\sqrt{2}$. 
Based on this state, a set of program-controlled stepping motors is employed to move optical elements—a PBS and two half-wave plates (HWPs)—in or out of light paths. 
This introduces flip operations to the photonic state, simulating the impact of white noise on the GHZ state. 
Consequently, the component states of the noisy GHZ state can be probabilistically prepared on photons a$_1$-b$_1$-c$_1$, allowing for the adjustment of the target parameter $p$.
In our experiment, the distillation network is carefully designed in a crossed structure. 
This effectively eliminates all same-order double-pair emission events of SPDC process, allowing for a faithful realization of tripartite entanglement distillation for two noisy GHZ states (see Section V.C of Supplementary Material~\cite{Supplemental} for details).

The detailed experimental setup is depicted in Fig.~\ref{Fig:setup}(c).
Four EPR sources are realized by sequentially targeting a pulsed ultraviolet laser (390 $\mathrm{nm}$, 80 $\mathrm{MHz}$, 150 $\mathrm{fs}$, 500 $\mathrm{mW}$) through four sandwich-like combinations of $\beta$-barium borate crystals~\cite{wang_experimental_2016,Li2019repeater}. 
By applying band-pass filters with $\Delta\lambda=4~\mathrm{nm}$ on each down-converted photons, the average counting rate for each EPR pair is registered as 199,000 $\mathrm{s}^{-1}$, with an average fidelity of 0.965.
Photons from different EPR sources are finely adjusted to achieve temporal and spatial overlaps on distillation PBSs, leading to Hong-Ou-Mandel-type interference. 
As a result, two $\ket{\mathrm{GHZ}_3}$ states prepared on photons a$_1$-b$_1$-c$_1$ and a$_2$-b$_2$-c$_2$ achieve fidelities of 0.833 and 0.837. 
The interference visibilities on PBS$_3$, PBS$_4$, and PBS$_5$ for the distillation operation are $79.6\%$, $78.8\%$, and $78.1\%$, respectively. 

\begin{figure*}[ht!]
	\centering
	\includegraphics[width=1 \linewidth]{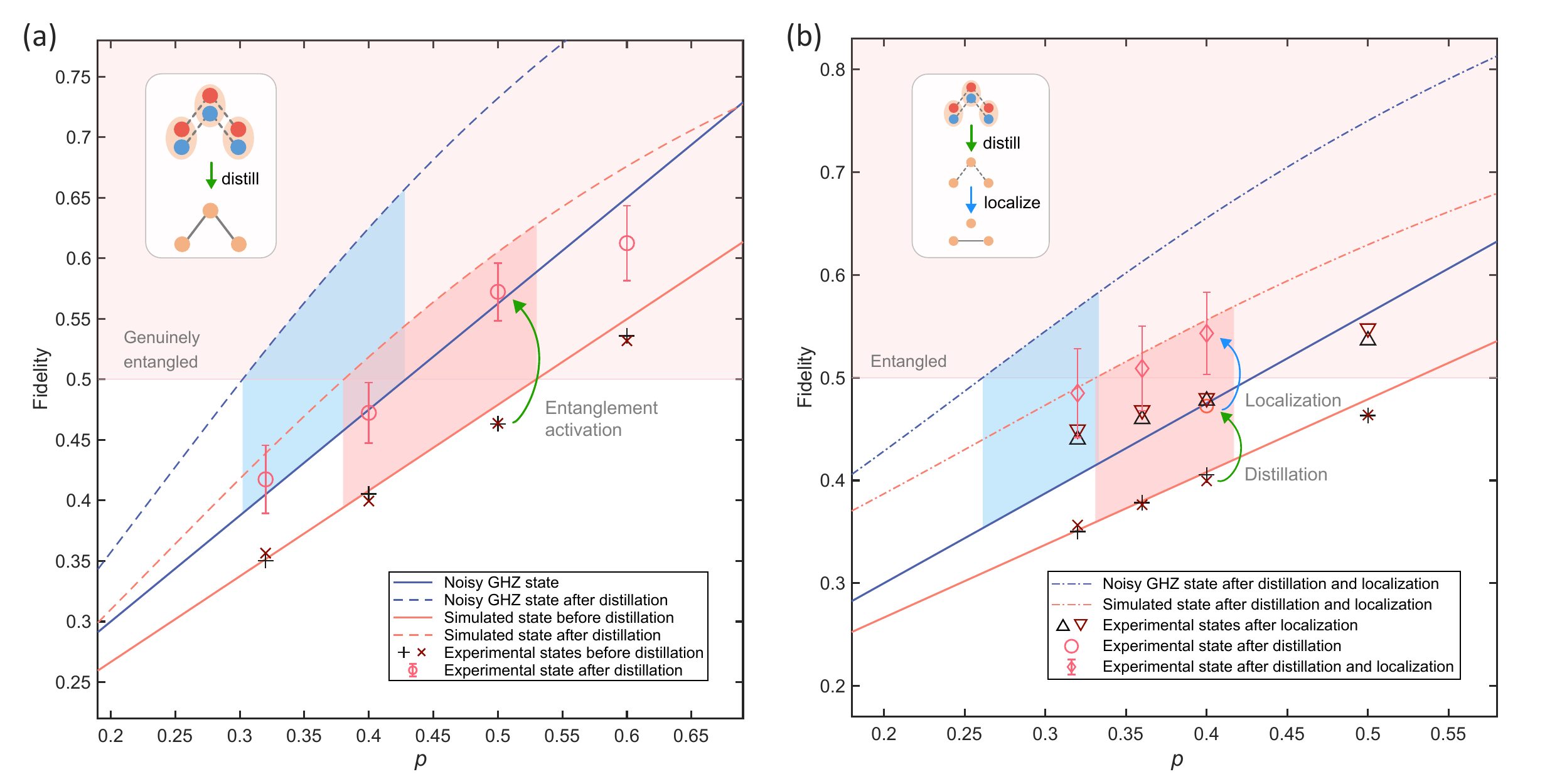}
	\caption{
		Experimental results. 
		Blue solid lines are fidelities between ideal noisy GHZ states and $\ket{\mathrm{GHZ}_3}$ state. 
		Red solid lines are fidelities of initial states simulated using the knowledge of noises in state preparation. 
		Blue and red shadowed regions represent ranges of entanglement superactivation.
		(a) Experimental results of the tripartite GME superactivation. 
		Blue and red dashed lines are calculated in the assumption of perfect distillation. 
		Due to unpredictable noises in the state preparation and distillation, experimental results have certain deviations from simulated lines. 
		GME superactivation is demonstrated by states with $p=0.5$, where the fidelity exceeds the GME threshold after distillation.
		(b) Experimental results of the tripartite SLE superactivation.
		Blue and red dash-dotted lines are calculated in the assumption of perfect distillation and localization with Pauli-$X$ measurement. 
		Triangles are fidelities of two-photon states extracted by individually localizing two initial states.
		Diamonds are fidelities of two-photon states extracted by tripartite distillation followed by localization. 
		For two-photon states after the localization operation, their $y$-coordinate values represent fidelities with the EPR pair.
		States with $p=0.5$ show that the localization operation can extract EPR pairs from noisy states without GME. 
		States with $p=0.4$ and $p=0.36$ show the existence of states that can be used to extract EPR pairs only with the assistance of tripartite distillation. 
	}
	\label{Fig:data}
\end{figure*}

\noindent \emph{Experimental results.---}
With the above experimental setup, we first evaluate the performance of our entanglement distillation scheme.
For noisy GHZ states, their GME can be determined using the entanglement witness~\cite{gunhe2009entanglement}, 
\begin{equation}\label{eq:EW}
	W = \frac{1}{2}\mathbb{I}_3-\ketbra{\mathrm{GHZ}_3},
\end{equation} 
which is equivalent to checking whether the fidelity $F=\bra{\mathrm{GHZ}_3}\rho\ket{\mathrm{GHZ}_3}$ exceeds $0.5$. 
Thus, we adopt $F$ as the target quantity to evaluate the performance of our entanglement distillation scheme.
Figure~\ref{Fig:data}(a) presents GME distillation results of noisy GHZ states with varying values of $p$. 
It exhibits a significant improvement in fidelity, demonstrating the effectiveness of our entanglement distillation scheme.
For instance, for states with a target parameter of $p=0.6$, we successfully extract a highly entangled state with a fidelity of 0.612(31) from two weakly entangled initial states with
actual fidelities of 0.536 and 0.532.


For states with a target parameter of $p=0.5$, the fidelities of the two initial states are $0.463$ and $0.464$, suggesting the possible absence of GME. 
After distillation, the fidelity increases to $0.572(23)$, surpassing the GME threshold of $0.5$ by more than three standard deviations, providing experimental evidence for GME superactivation. 
To establish this conclusion rigorously, it is crucial to ensure the absence of GME in initial states, as noise in state preparation can significantly weaken the effectiveness of the entanglement witness~\cite{liu2022fundamental}. 
For this purpose, we perform state tomography on the two initial states.
Using methods developed in Refs.~\cite{ma2011gme,hashemi2012genuine,peres1996ppt,jungnitsch2011taming,ties2024sep} and the tomographic data, we numerically verify the absence of GME in initial states and confirm the occurrence of GME superactivation. 
As for the states with target parameters of $p=0.4$ and $p=0.32$, although clear improvements in fidelities are also observed after distillation, the fidelities do not exceed $0.5$.
This suggests that GME superactivation is confined in a narrow parameter range.

As another entanglement resource for multipartite quantum states, some states in our experiment possess SLE despite the absence of GME, such as the two initial states with $p=0.5$. 
This implies that we can localize their entanglement into specific subsystems and extract EPR pairs. 
For an ideal three-photon noisy GHZ state, measuring one photon in Pauli-$X$ basis is one of the optimal entanglement localization operations. 
In our experiment, by projecting Bob's photon to the state $\ket{+}$ or $\ket{-}$, the fidelities between the remaining two-photon states and the EPR pair both exceed the bipartite entanglement threshold of $0.5$.
This demonstrates a successful entanglement localization, as indicated by the two triangles for $p=0.5$ in Fig.~\ref{Fig:data}(b).

We find that entanglement localization can benefit from tripartite distillation. 
As indicated by two triangles at $p=0.4$ in Fig.~\ref{Fig:data}(b), the localization operation fails to extract EPR pairs from both initial states.
However, by pre-executing the tripartite distillation operation on two initial states before the localization step, the fidelity between the final two-photon state and the EPR pair exceeds $0.5$. This process is indicated by the green and blue arrows. 
Thus, we experimentally observe that tripartite distillation facilitates the ability to extract EPR pairs.
Note that during the processes of entanglement distillation and localization, we effectively perform a Bell state measurement on Bob's two photons, which enables the execution of other multipartite entanglement manipulation protocols~\cite{PurificationswappingPRA1999}.
However, due to the deviation between the experimentally prepared state and the ideal one,  the failure of a specific localization operation based on Pauli-$X$ basis measurement cannot rule out the existence of SLE in initial states of $p=0.4$.
We thus perform tomography on two initial states with $p=0.36$ and confirm the absence of SLE with the SLE existence criterion in Section II.A of Supplementary Material~\cite{Supplemental}.
After applying distillation and localization operations sequentially to the two initial states, we observe the fidelity between the remaining two-photon state and EPR pair exceeds $0.5$. 
Since SLOCC cannot create SLE, this result experimentally confirms our theoretical prediction of SLE superactivation.

\noindent \emph{Discussion.---}
In this work, by constructing a tripartite entanglement distillation network in an eight-photon linear optical platform, we demonstrate that entanglement superactivation is a versatile tool for resource recycling in quantum networks. 
Our work accomplishes these key experimental advancements: the first experimental demonstration of GME superactivation and the discovery of SLE superactivation.
Despite these advancements, the probabilistic nature of SPDC sources imposes rate limitations, future integration of deterministic sources (e.g., quantum dots~\cite{uppu2020scalable}) could enable higher-efficiency state preparation.
Furthermore, the development of entanglement distillation techniques based on CNOT gates could facilitate more general distillation operations~\cite{bennett1996purification,MuraoPRA1998}.

Our main theoretical contributions focus on SLE, including its definition and existence criterion, as well as SLE superactivation, encompassing its prediction and comparison with GME superactivation. 
As a multipartite entanglement concept with profound physical significance, we believe it is important to identify the necessary and sufficient conditions for SLE superactivation and to explore its connections with other notions such as state interconvertibility~\cite{dur2000three}. Pursuing these directions may also provide insights into longstanding open problems, including the PPT square conjecture~\cite{ruskai2012operator}.

\noindent \emph{Note added.---} In revising the manuscript, we noted a similar GME superactivation experiment in the trapped-ion system~\cite{starek2025experimental}.

\noindent \emph{Acknowledgements.---} We appreciate insightful discussions with Ties-A. Ohst, Qingyu Li, Jens Eisert, Satoya Imai, Hayata Yamasaki, Xiaodong Yu, Luo-Kan Chen, and Feihu Xu.
This work was supported by the National Natural Science Foundation of China (Grants No.~11975222, No.~11874340, No.~12174216), Shanghai Municipal Science and Technology Major Project (Grant No.~2019SHZDZX01), Chinese Academy of Sciences and the Shanghai Science and Technology Development Funds (Grant No.~18JC1414700), and the Innovation Program for Quantum Science and Technology (Grant No.~2021ZD0301901, No.~2021ZD0300804).
Xu-Fei Yin was supported from the China Postdoctoral Science Foundation (Grant No.~ 2023M733418).
Otfried Gühne was supported by the Deutsche Forschungsgemeinschaft (DFG, German Research Foundation, project number 563437167), the 
Sino-German Center for Research Promotion (Project M-0294), and the
German Federal Ministry of Research, Technology and Space (Project QuKuK, Grant No.~16KIS1618K and Project BeRyQC, Grant No.~13N17292).
Yu-Ao Chen was supported by the XPLORER PRIZE from Tencent Foundation.

%

\end{document}


\title{Supplementary Material for ``Entanglement Superactivation in Multiphoton Distillation Networks"}

\maketitle

\tableofcontents
\clearpage


\section{Background}\label{sec:preliminary}
We first give some basic definitions which will be frequently used in this work. An $N$-partite state $\rho$ is said to be bi-separable if it can be written as
\begin{equation}\label{eq:bi_sep}
\rho=\sum_{g\subset[N]}\sum_ip_{gi}\rho_{g}^i\otimes\rho_{\Bar{g}}^i,
\end{equation}
where $g$ denotes a nontrivial subset of the $N$ parties, $\Bar{g}=[N]-g$ is the complementary of $g$, $\{p_{gi}\}_{g,i}$ is a probability distribution satisfying $\sum_{g,i}p_{g,i}=1$ and $p_{gi}\ge 0$, $\rho_{g}^i$ and $\rho_{\Bar{g}}^i$ are quantum states defined on systems $g$ and $\Bar{g}$. If an $N$-partite state $\rho$ cannot be decomposed into this form, it is a genuinely multipartite entangled state. It is worth mentioning that a bi-separable state may also have quantum resources. For example, it can be entangled with respect to a given bi-partition.

Local operations and classical communication (LOCC) can be written in a separable form \cite{Chitambar2014LOCC} 
\begin{equation}\label{eq:locc}
\Lambda(\rho)=\sum_i\left(K_1^i\otimes\cdots\otimes K_N^i\right)\rho\left(K_1^i\otimes\cdots\otimes K_N^i\right)^\dagger,
\end{equation}
where $\rho$ is an $N$-partite quantum state and $K$ denotes the Kraus operator satisfying 
\begin{equation}\label{eq:normalization}
\sum_i\left(K_1^i\otimes\cdots \otimes K_N^i\right)^{\dagger}\left(K_1^i\otimes\cdots \otimes K_N^i\right)=\mathbb{I}.
\end{equation}
It is worth mentioning, however, that not all separable maps
as in Eq.~(\ref{eq:locc}) can be implemented via LOCC~\cite{bennett1999nonlocal}.
Stochastic local operations and classical communication (SLOCC) are defined based on LOCC, which permits post-selection of the resultant states in Eq.~\eqref{eq:locc}. Thus, SLOCC can also be written as the separable form of Eq.~\eqref{eq:locc} with 
\begin{equation}
\sum_i\left(K_1^i\otimes\cdots \otimes K_N^i\right)^\dagger\left(K_1^i\otimes\cdots \otimes K_N^i\right)\le\mathbb{I}.
\end{equation}
Due to the stochastic property, a SLOCC operation may succeed with a probability less than $1$.

It can be proved that neither LOCC nor SLOCC can distill GME from one copy of a bi-separable state. As for bi-separable state $\rho$,
\begin{equation}
\Lambda(\rho)=\sum_{g\in[N]}\sum_ip_{gi}\sum_j\left(K_1^j\otimes\cdots\otimes K_N^j\right)\rho_g^i\otimes\rho_{\Bar{g}}^i\left(K_1^j\otimes\cdots\otimes K_N^j\right)^\dagger
\end{equation}
is also a bi-separable state with a form similar to Eq.~\eqref{eq:bi_sep}. 

It has been shown that the simultaneous preparation of multiple bi-separable states can activate GME \cite{Yamasaki2022activation}. 
For example, if we collect two bi-separable tripartite states $\sigma_{ABC}$ and $\sigma^\prime_{A^\prime B^\prime C^\prime}$ and regard them as a whole to be a new tripartite state $\rho_{AA^\prime,BB^\prime,CC^\prime}=\sigma_{ABC}\otimes\sigma^\prime_{A^\prime B^\prime C^\prime}$, this new state can have tripartite GME over the partition $AA^\prime$, $BB^\prime$, and $CC^\prime$.

\section{GME and SLE verification}\label{sec:verification_methods}

\subsection{Theoretical criteria}
As the theme of our work is entanglement superactivation, including GME and SLE superactivation, it is necessary to ensure that to-be-distilled states have no GME and SLE. 
In the main text, we adopt two fidelity-based entanglement witnesses to indicate the absence of these two properties, which are not perfect. To further ensure the absence, we need more rigorous GME and SLE verification methods, which will be discussed in this section.

For a general multipartite state, verifying the existence of GME is challenging, and verifying the absence of GME is even more challenging. In theory, necessary and sufficient conditions of GME exist for some density matrices with special forms. For example, every ``X''-shaped density matrix is entangled if and only if its GME concurrence is larger than 0 \cite{ma2011gme,hashemi2012genuine}. Here, a density matrix is said to be of ``X''-shaped if we can write it as a form of
\begin{equation}
\rho=
\begin{bmatrix}
M_1 & M_2 \\
M_2^\dagger & M_3
\end{bmatrix},
\end{equation}
where $M_1$, $M_2$, and $M_3$ are square matrices satisfying $M_1=\mathrm{diag}(a_1,\cdots,a_L)$, $M_3=\mathrm{diag}(b_L,\cdots,b_1)$, and $M_2=\mathrm{antidiag}(c_1,\cdots,c_L)$.
The GME concurrence for this state is
\begin{equation}\label{eq:GME_concurrence}
C_{\mathrm{GME}}\left(\begin{bmatrix}
M_1 & M_2 \\
M_2^\dagger & M_3
\end{bmatrix}\right)=2\max_i\{0,|c_i|-\sum_{j\neq i}\sqrt{a_jb_j}\}.
\end{equation}

We introduce the GME concurrence method for ``X''-shaped density matrices because our target states, including the noisy GHZ state, the distilled noisy GHZ state, and the noisy GHZ state with bit-flip and phase-flip errors, all have ``X''-shaped density matrices. 
However, due to unpredictable noises, real density matrices may deviate from the ideal ``X''-shaped matrices. 
So here, we also use another GME verification method for states with general forms of density matrices, the so-called PPT mixer approach. If the solution of the following semidefinite programming problem for a tripartite state $\rho_{ABC}$ is negative, then $\rho_{ABC}$ is genuinely multipartite entangled \cite{jungnitsch2011taming},
\begin{equation}\label{eq:PPT_optimization}
\begin{aligned}
    \min_W & \tr(W\rho_{ABC})\\
    \mathrm{s.t.} & \tr(W)=1,\\
    & W=P_1+Q_1^{\mathrm{T}_A} \ , \ P_1\ge 0 \ , \ Q_1\ge 0,\\
    & W=P_2+Q_2^{\mathrm{T}_B} \ , \ P_2\ge 0 \ , \ Q_2\ge 0,\\
    & W=P_3+Q_3^{\mathrm{T}_C} \ , \ P_3\ge 0 \ , \ Q_3\ge 0,\\
\end{aligned}
\end{equation}
where $\mathrm{T}_A$ denotes the partial transposition operation for indices belonging to subsystem $A$. This criterion is generalized from the PPT criterion \cite{peres1996ppt} for bipartite entanglement verification. 
It is worth noting that most entanglement criteria, including this, are sufficient but not necessary conditions. 
If the solution of this optimization problem is positive for some state $\rho_{ABC}$, we cannot be sure that this state is bi-separable. 
Nonetheless, as PPT is a powerful entanglement criterion (in fact, it is necessary and sufficient for several families of states, including X-shaped states), the positive value of this optimization problem provides strong evidence for bi-separability.
In this work, we will use this PPT-based criterion as evidence for GME of a given tripartite state.

Let us now discuss the theoretical characterization of SLE. First, we have the following criterion.
\begin{theorem}[SLE Verification Criterion]\label{theorem:main}
	An $N$-partite state $\rho$ possesses SLE on subsystems $A$ and $B$ if and only if there exists a pure tensor state $\ket{\psi}=\bigotimes_{g\in \overline{AB}}\ket{\psi_g}$ defined on the complementary set $\overline{AB}$, such that $\rho_{AB}=\frac{\bra{\psi}\rho\ket{\psi}}{\tr(\bra{\psi}\rho\ket{\psi})}$ is an entangled state.
\end{theorem}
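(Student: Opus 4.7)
The plan is to prove both directions of the equivalence. The ``if'' direction amounts to constructing a concrete SLOCC protocol from the product vector $\ket{\psi}$, while the ``only if'' direction extracts such a product vector from an arbitrary SLE--demonstrating SLOCC by peeling off the local operations on $A$ and $B$ and then applying a convexity argument.

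For sufficiency, suppose $\ket{\psi}=\bigotimes_{g\in\overline{AB}}\ket{\psi_g}$ exists with $\rho_{AB}\propto\bra{\psi}\rho\ket{\psi}$ entangled. Let each party $g\in\overline{AB}$ apply the local rank-one Kraus operator $K_g=\ket{0_g}\bra{\psi_g}$, while $A$ and $B$ act trivially. The overall operator $K=\bigl(\bigotimes_{g\in\overline{AB}}K_g\bigr)\otimes\id_A\otimes\id_B$ is separable across all $N$ parties and satisfies $K^\dagger K\le\id$, so it constitutes a valid single-branch SLOCC in the sense of Eq.~\eqref{eq:normalization} with inequality. Post-selecting on success at every $g\in\overline{AB}$ yields the bipartite state $\rho_{AB}$ on $A,B$, which is entangled by assumption; hence $\rho$ possesses SLE on $A,B$.

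For necessity, assume $\rho$ possesses SLE on $A,B$, so that some SLOCC produces with nonzero probability a bipartite state on $AB$ (after tracing out the other parties) that is entangled between $A$ and $B$. Writing the output as $\sigma_{AB}\propto\sum_i\mathrm{tr}_{\overline{AB}}\bigl[K^{(i)}\rho (K^{(i)})^\dagger\bigr]$ over separable Kraus branches $K^{(i)}=K_1^i\otimes\cdots\otimes K_N^i$, and using that a sum of $A|B$--separable unnormalized positive operators is itself separable, some single branch $K=K_1\otimes\cdots\otimes K_N$ must already yield an entangled (unnormalized) operator on $AB$. Setting $K_{\overline{AB}}=\bigotimes_{g\in\overline{AB}}K_g$, the local factor $K_A\otimes K_B$ commutes with $\mathrm{tr}_{\overline{AB}}$ and cannot create $A|B$ entanglement, so
\begin{equation}
M\;:=\;\mathrm{tr}_{\overline{AB}}\!\left[(K_{\overline{AB}}\otimes\id_A\otimes\id_B)\,\rho\,(K_{\overline{AB}}^\dagger\otimes\id_A\otimes\id_B)\right]
\end{equation}
is itself entangled on $AB$. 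Expanding this partial trace in any product basis $\{\ket{k_g}\}_{g\in\overline{AB}}$ rewrites $M$ as
\begin{equation}
M\;=\;\sum_{\{k_g\}}\bra{\Phi_{\{k_g\}}}\rho\ket{\Phi_{\{k_g\}}},\qquad \ket{\Phi_{\{k_g\}}}=\bigotimes_{g\in\overline{AB}}K_g^\dagger\ket{k_g},
\end{equation}
a sum of projections of $\rho$ onto pure tensor states on $\overline{AB}$. A final invocation of the same convexity principle selects an index $\{k_g^\star\}$ for which the corresponding summand is entangled; normalizing $\ket{\Phi_{\{k_g^\star\}}}$ supplies the desired product witness $\ket{\psi}$.

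The main subtlety, though conceptually tame, is the repeated use of the fact that a sum of $A|B$--separable unnormalized positive operators is separable. This requires viewing the separable operators as the convex cone generated by pure-product projections, which is manifestly closed under addition and positive rescaling; the contrapositive then supplies the ``some summand must be entangled'' step in two places. Once this is in hand, the rest is routine algebra---pulling local factors outside partial traces and rewriting partial traces as sums of product-vector sandwiches.
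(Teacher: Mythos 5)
Your proof is correct and follows essentially the same route as the paper's: the identical post-selection construction for sufficiency, and for necessity the same two-stage convexity argument (first isolating a single separable Kraus branch, then, after pulling $K_A\otimes K_B$ outside the partial trace, isolating a single product-vector sandwich $\bra{\psi}\rho\ket{\psi}$ from a nonnegative combination). The only difference is cosmetic: where the paper inserts the singular value decomposition of each $K_g$ to reduce $\tr_{\overline{AB}}\bigl[\bigl(\bigotimes_{g}K_g\bigr)\rho\bigl(\bigotimes_{g}K_g^\dagger\bigr)\bigr]$ to a sum of terms $\bra{u}\rho\ket{u}$ over product vectors $\ket{u}$, you expand the partial trace in an arbitrary product basis and absorb $K_g^\dagger$ into the basis vectors, reaching the same conclusion slightly more directly.
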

\begin{proof}
	If $\rho_{AB}$ defined above is entangled, one can perform POVM measurements $\{\ketbra{\psi_g},\mathbb{I}-\ketbra{\psi_g}\}_g$ on all the other subsystems.
	After measurement, $\rho_{AB}$ is kept only when other subsystems get results of $\{\ketbra{\psi_g}\}_g$.
	Through this SLOCC protocol, one successfully produces $\rho_{AB}=\frac{\bra{\psi}\rho\ket{\psi}}{\tr(\bra{\psi}\rho\ket{\psi})}$, where $\ket{\psi}=\bigotimes_{g\in \overline{AB}}\ket{\psi_g}$, from $\rho$ and the entanglement of $\rho_{AB}$ shows that $\rho$ indeed has SLE.
	
	By definition, if $\rho$ has SLE on subsystems $A$ and $B$, there exists an SLOCC operation with separable Kraus operators $\{K^i=K_A^i\otimes K_B^i\otimes\cdots\}_i$ such that
		\begin{eqnarray} 
		\frac{\tr_{\overline{AB}}(\sum_i K^i\rho K^{i\dagger})}{\tr(\sum_j K^j\rho K^{j\dagger})}=\sum_i\frac{\tr_{\overline{AB}}( K^i\rho K^{i\dagger})}{\tr(\sum_j K^j\rho K^{j\dagger})}
		\end{eqnarray} 
	is an entangled state on $A$ and $B$. As the set of separable states is convex, there exists at least one Kraus operator $K= \bigotimes_{g\in[N]}K_g$, such that
		\begin{eqnarray} 
		\frac{\tr_{\overline{AB}}\left(K\rho K^\dagger\right)}{\tr\left(K\rho K^\dagger\right)}=\frac{(K_A\otimes K_B)\tr_{\overline{AB}}\left[\left(\bigotimes_{g\in \overline{AB}}K_g\right)\rho \left(\bigotimes_{g\in \overline{AB}}K_g^\dagger\right)\right](K_A\otimes K_B)^\dagger}{\tr\left(K\rho K^\dagger\right)}
		\end{eqnarray} 
	is an entangled state. As SLOCC cannot activate bipartite entanglement, this means that\\ $\tr_{\overline{AB}}\left[\left(\bigotimes_{g\in \overline{AB}}K_g\right)\rho \left(\bigotimes_{g\in \overline{AB}}K_g^\dagger\right)\right]$ is an unnormalized entangled state. According to the singular value decomposition, we can decompose $K_g$ as $K_g=\sum_i\sqrt{\lambda_g^{i_g}} \ketbra{v_{g}^{i_g}}{u_{g}^{i_g}}$ where $\sqrt{\lambda_g^{i_g}}\ge0$ is the singular value of $K_g$, $\{\ket{u_{i_g}}\}_i$ and $\{\ket{v_{i_g}}\}_i$ are sets of mutually orthogonal states. 
	Substituting the singular value decomposition, we have
		\begin{equation}
		\label{eq:singular_state}
		\begin{aligned}
		&\tr_{\overline{AB}}\left[\left(\bigotimes_{g\in \overline{AB}}K_g\right)\rho \left(\bigotimes_{g\in \overline{AB}}K_g^\dagger\right)\right]\\
		=&\left(\sum_{g\in\overline{AB}}\sum_{i_g,j_g}\right)\left(\prod_{g\in\overline{AB}}\sqrt{\lambda_g^{i_g}\lambda_g^{j_g}}\braket{v_g^{j_g}}{v_g^{i_g}}\right)\left(\bigotimes_{g\in\overline{AB}}\bra{u_g^{i_g}}\right)\rho\left(\bigotimes_{g\in\overline{AB}}\ket{u_g^{j_g}}\right)\\
		=&\left(\sum_{g\in\overline{AB}}\sum_{i_g}\right)\left(\prod_{g\in\overline{AB}}\lambda_g^{i_g}\right)\left(\bigotimes_{g\in\overline{AB}}\bra{u_g^{i_g}}\right)\rho\left(\bigotimes_{g\in\overline{AB}}\ket{u_g^{i_g}}\right),
		\end{aligned}
		\end{equation}
	where we use the property of $\braket{v_g^{j_g}}{v_g^{i_g}}=\delta_{i,j}$. As $\prod_{g\in\overline{AB}}\lambda_g^{i_g}\ge0$, if the state in Eq.~\eqref{eq:singular_state} is entangled, at least one element in the summation, $\left(\bigotimes_{g\in\overline{AB}}\bra{u_g^{i_g}}\right)\rho\left(\bigotimes_{g\in\overline{AB}}\ket{u_g^{i_g}}\right)$, is entangled. This concludes our proof. Note that similar arguments appeared also in footnote [18] of Ref.~\cite{gunhe2016emergent}
	
\end{proof}

Although SLE seems to represent a lower level of multipartite entanglement, Ref.~\cite{gunhe2016emergent} found a genuine multipartite entangled state, which has no SLE according to this theorem, showing the independence of SLE and GME. 

The certification of SLE can be reduced to a bipartite entanglement problem based on this theorem. 
Specifically, to certify if a tripartite state $\rho_{ABC}$ has SLE on subsystems $A$ and $B$, we can use the following function
\begin{equation}\label{eq:SLE_calculation}
\mathcal{E}_{\mathrm{SL}}(\rho_{ABC})=\max_{\psi_C}\mathcal{E}\left(\frac{\bra{\psi_C}\rho_{ABC}\ket{\psi_C}}{\tr(\bra{\psi_C}\rho_{ABC}\ket{\psi_C})}\right),
\end{equation}
where $\mathcal{E}(\cdot)$ is a bipartite entanglement quantifier. 
If $A$ and $B$ are two qubits, we can assign $\mathcal{E}(\cdot)$ to be the entanglement negativity $\mathcal{N}(\rho_{AB})=\log(\tr|\rho_{AB}^{\mathrm{T}_B}|)$, which is defined by the violation of PPT criterion \cite{peres1996ppt}. 
In this case, $\rho_{ABC}$ has SLE on $A$ and $B$ if and only if $\mathcal{E}_{\mathrm{SL}}(\rho_{ABC})>0$, as PPT criterion is the necessary and sufficient condition for two-qubit entanglement.

\subsection{Noisy initial states} \label{sec:SLE_GME_noisy}
To show that we realize the GME and SLE superactivation in our experiments, we need to prove the absence of GME in to-be-distilled noisy states of $p=0.5$ and the absence of SLE in to-be-distilled noisy states of $p=0.36$, as shown in Fig.~3 of the main text. 
Due to unavoidable experimental noise, GHZ states with ideal white noise cannot be prepared.
We mainly adopt two methods to describe these noisy states. The first is based on our knowledge of errors in experiments. 
In preparing the GHZ state, we know that dominant errors are phase-flip and bit-flip, which can convert the ideal GHZ state $\ket{G_0^+}$ into seven other GHZ states. 
All these eight GHZ states are listed as following
\begin{equation} \label{eq:eight_GHZ}
\begin{aligned}
\ket{G_0^{\pm}}&=\frac{1}{\sqrt{2}}(\ket{000}\pm\ket{111}),\\
\ket{G_1^{\pm}}&=\frac{1}{\sqrt{2}}(\ket{001}\pm\ket{110}),\\
\ket{G_2^{\pm}}&=\frac{1}{\sqrt{2}}(\ket{010}\pm\ket{101}),\\
\ket{G_3^{\pm}}&=\frac{1}{\sqrt{2}}(\ket{100}\pm\ket{011}).
\end{aligned}
\end{equation}
Because of the symmetry in light paths, we assume three photons have the same error rate.
Thus, we can use three parameters to describe the to-be-distilled states that we actually prepare
\begin{equation}\label{eq:real_state}
\rho = p\left[r\rho_0+\frac{1-r}{3}\left(\rho_1+\rho_2+\rho_3\right)\right]+(1-p)\frac{\mathbb{I}_3}{8},
\end{equation}
where $\rho_i = q\ketbra{G_i^+}{G_i^+}+(1-q)\ketbra{G_i^-}{G_i^-}$, $p$, and $q$, and $r$ are parameters representing rates of white noise, phase-flip error, and bit-flip error, respectively.  

In addition to the noise model, we performed quantum state tomography measurement on to-be-distilled states. Thus, we can use reconstructed density matrices shown in Sec.~\ref{sec:tomo} to analyze the GME and SLE. 
It is worth mentioning that, due to experimental errors in quantum tomography, the reconstructed density matrix may not be closer to the real state than the noise model, Eq.~\eqref{eq:real_state}. 
Therefore, it is meaningful to adopt both methods to analyze the existence of GME and SLE.

We start from the GME analysis of two to-be-distilled states of $p=0.5$. Firstly, with the noise model in Eq.~\eqref{eq:real_state}, the density matrix of $\rho$ is
\begin{equation}\label{eq:real_state_matrix}
\rho=
\begin{bmatrix}
\frac{1-p}{8}+\frac{1}{2}pr & 0 & 0 & 0 & 0 & 0 & 0 & pr(q-\frac{1}{2})\\
0 & \frac{1}{8}+\frac{p-4pr}{24} & 0 & 0 & 0 & 0 & \frac{1-r}{3}p(q-\frac{1}{2}) & 0\\
0 & 0 & \frac{1}{8}+\frac{p-4pr}{24} & 0 & 0 & \frac{1-r}{3}p(q-\frac{1}{2}) & 0 & 0\\
0 & 0 & 0 & \frac{1}{8}+\frac{p-4pr}{24} & \frac{1-r}{3}p(q-\frac{1}{2}) & 0 & 0 & 0\\
0 & 0 & 0 & \frac{1-r}{3}p(q-\frac{1}{2}) & \frac{1}{8}+\frac{p-4pr}{24} & 0 & 0 & 0\\
0 & 0 & \frac{1-r}{3}p(q-\frac{1}{2}) & 0 & 0 & \frac{1}{8}+\frac{p-4pr}{24} & 0 & 0\\
0 & \frac{1-r}{3}p(q-\frac{1}{2}) & 0 & 0 & 0 & 0 & \frac{1}{8}+\frac{p-4pr}{24} & 0\\
pr(q-\frac{1}{2}) & 0 & 0 & 0 & 0 & 0 & 0 & \frac{1-p}{8}+\frac{1}{2}pr
\end{bmatrix},
\end{equation}
which is clearly an ``X"-shaped density matrix. Thus, to decide whether it is genuinely entangled or not, we only need to calculate the GME concurrence, 
\begin{equation}\label{eq:concurrence_noise_model}
C_{\mathrm{GME}}\left(\rho\right)=2\max\{0,pr(q-\frac{1}{2})-\frac{3}{8}-\frac{p-4pr}{8},\frac{1}{3}p(1-r)(q-\frac{1}{2})-\frac{1-p}{8}-\frac{pr}{2}-\frac{1}{4}-\frac{p-4pr}{12}\}.
\end{equation}
Using the reconstructed density matrix of Fig.~\ref{Fig:tomo_1.0}, we find that the parameters of two initial states are approximately $(p,q,r)=(0.5,0.9084,0.9210)$ and $(0.5,0.9124,0.9129)$. Substituting them into Eq.~\eqref{eq:concurrence_noise_model}, GME concurrences are $2\max\{0,-0.0192,-0.4255\}=0$ and $2\max\{0,-0.0210,-0.4243\}=0$, which show that these two states have no GME.

We can also directly use the reconstructed density matrices from quantum tomography to analyze the GME. 
As reconstructed density matrices are not strict ``X"-shaped matrices, calculating GME concurrence is challenging \cite{ma2011gme}. We thus use the PPT-based GME verification method in Eq.~\eqref{eq:PPT_optimization} to test these two states. 
We input reconstructed density matrices of the two to-be-distilled states, shown in Fig.~\ref{Fig:tomo_0.5}, into this optimization problem and solve it numerically. 
Results are $0.0096$ and $0.0066$ for two density matrices. 
These two values are all greater than 0, so this PPT-based verification method cannot detect their GME. 
As this method does not ask for an ``X''-shaped density matrix and the PPT criterion is a strong entanglement criterion, these results provide strong evidence that to-be-distilled states have no GME.

In addition, we also numerically verify it.
We adopt the code from Ref.~\cite{ties2024sep} to show that these two states all have the bi-separable decomposition in Eq.~\eqref{eq:bi_sep}, which provides a sufficient condition for the absence of GME.

We now analyze the SLE of to-be-distilled states with $p=0.36$ based on Theorem~\ref{theorem:main}. 
Here, we numerically verify SLE by searching over all single-qubit pure states $\ket{\psi_C}=\cos\theta\ket{0}+\sin\theta e^{i\phi}\ket{1}$ to find a positive value of $\mathcal{E}_{\mathrm{SL}}(\rho_{ABC})$ defined in Eq.~\eqref{eq:SLE_calculation}, where $\mathcal{E}(\cdot)$ is chosen to be entanglement negativity. 
We numerically calculate that the noisy state in Eq.~\eqref{eq:real_state} with parameters $(p,q,r)=(p,0.9084,0.9210)$ has no SLE when $p<0.4095$ and the state with parameters $(p,q,r)=(p,0.9124,0.9129)$ has no SLE when $p<0.4103$. 
This shows that two to-be-distilled states with $p=0.36$ have no SLE. 
We also directly use reconstructed density matrices shown in Fig.~\ref{Fig:tomo_0.36} to prove the absence of SLE. 
After calculation, values of $\mathcal{E}_{\mathrm{SL}}(\rho_{ABC})$ of these two to-be-distilled states are both 0. 
To further consolidate this conclusion, we change $\mathcal{E}(\cdot)$ to be the lowest eigenvalue of the input bipartite density matrix after partial transposition, $\mathcal{E}(\rho_{AB})=\lambda_{\min}(\rho_{AB}^{\mathrm{T}_B})$.
In this case, if $\mathcal{E}_{\mathrm{SL}}(\rho_{ABC})$ is larger than $0$, $\rho_{ABC}$ has no SLE.
We numerically calculate the values of $\mathcal{E}_{\mathrm{SL}}(\rho_{ABC})$ for two to-be-distilled states to be $0.036$ and $0.029$, which confirms the absence of SLE in these two states.

\section{GME and SLE superactivation} \label{sec:ideal_Werner}

\subsection{Noisy GHZ state}\label{subsec:noisy_GHZ}
The noisy GHZ state has a simple mathematical form. We can use it as a theoretical example to analyze the superactivation of GME and SLE. 
The noisy GHZ state is parameterized by a single parameter $p$ as 
\begin{equation}\label{eq:ideal_werner}
    \rho_{\mathrm{noise}}=p\ketbra{G_0^+}{G_0^+}+\frac{1-p}{8}\mathbb{I}_3,
\end{equation}
which is apparently an ``X"-shaped density matrix, with GME concurrence
\begin{equation}
C_{\mathrm{GME}}(\rho_{\mathrm{noise}})=2\max\{0,\frac{7}{8}p-\frac{3}{8}\}.
\end{equation}
Thus, the noisy GHZ state is genuinely multipartite entangled if and only if $p>\frac{3}{7}$. After the distillation procedure shown in Sec.~\ref{sec:distillation_protocol}, the noisy GHZ state  becomes
\begin{equation}\label{eq:dis_werner}
\rho_{\mathrm{noise}}^\prime=\frac{1}{3p^2+1}
\begin{bmatrix}
\frac{(3p+1)^2}{8} & 0 & 0 & 0 & 0 & 0 & 0 & 2p^2\\
0 & \frac{(1-p)^2}{8} & 0 & 0 & 0 & 0 & 0 & 0\\
0 & 0 & \frac{(1-p)^2}{8} & 0 & 0 & 0 & 0 & 0\\
0 & 0 & 0 & \frac{(1-p)^2}{8} & 0 & 0 & 0 & 0\\
0 & 0 & 0 & 0 & \frac{(1-p)^2}{8} & 0 & 0 & 0\\
0 & 0 & 0 & 0 & 0 & \frac{(1-p)^2}{8} & 0 & 0\\
0 & 0 & 0 & 0 & 0 & 0 & \frac{(1-p)^2}{8} & 0\\
2p^2 & 0 & 0 & 0 & 0 & 0 & 0 & \frac{(3p+1)^2}{8}
\end{bmatrix},
\end{equation}
which is also an ``X''-shaped density matrix. Thus, the GME concurrence of $\rho^\prime_{\mathrm{noise}}$ can also be calculated as
\begin{equation}
C_{\mathrm{GME}}(\rho^\prime_{\mathrm{noise}})=\frac{1}{3p^2+1}\max\{0,2p^2-\frac{3}{8}(1-p)^2\}.
\end{equation}
To let $C_{\mathrm{GME}}(\rho^\prime_{\mathrm{noise}})>0$, we can find that $p>\frac{4\sqrt{3}-3}{13}\sim0.3022$. Therefore, in the range of $p\in(\frac{4\sqrt{3}-3}{13},\frac{3}{7})$, the noisy GHZ state has the phenomenon of GME superactivation by collecting two copies. 
Note that $\frac{4\sqrt{3}-3}{13}$ may not be the lowest value of GME superactivation with two copies of noisy GHZ states, as we derive it with a specific distillation protocol.

Now, we turn to analyze SLE. According to Eq.~\eqref{eq:SLE_calculation}, the calculation of SLE on subsystems $A$ and $B$ needs to search over all pure states on subsystem $C$, $\ket{\psi_C}=\cos{\theta}\ket{0}+\sin{\theta}e^{i\phi}\ket{1}$. For the noisy GHZ state,
\begin{equation}
\frac{\bra{\psi_C}\rho_{\mathrm{noise}}\ket{\psi_C}}{\tr(\bra{\psi_C}\rho_{\mathrm{noise}}\ket{\psi_C})}=p(\cos{\theta}\ket{00}+\sin{\theta}e^{-i\phi}\ket{11})(\cos{\theta}\bra{00}+\sin{\theta}e^{i\phi}\bra{11})+\frac{1-p}{4}\mathbb{I}_2.
\end{equation}
It is easy to check that when searching over all the possible values of $\theta$ and $\phi$, the maximal entanglement is obtained when $\theta=\frac{\pi}{4}$, with the state $\frac{p}{2}(\ket{00}+e^{-i\phi}\ket{11})(\bra{00}+e^{i\phi}\bra{11})+\frac{1-p}{4}\mathbb{I}_2$. It can be verified using the PPT criterion that when $p>\frac{1}{3}$, this two-qubit state is entangled. 
Therefore, the noisy GHZ state has SLE if and only if $p>\frac{1}{3}$.

According to Eq.~\eqref{eq:dis_werner}, after distillation, the noisy GHZ state can be written as
\begin{equation}
\rho_{\mathrm{noise}}^\prime=\frac{3p^2+p}{3p^2+1}\ketbra{G_0^+}{G_0^+}+\frac{p-p^2}{3p^2+1}\ketbra{G_0^-}{G_0^-}+\frac{(1-p)^2}{8(3p^2+1)}\mathbb{I}_3.
\end{equation}
Thus, after the inner product with the reference state, we have
\begin{equation}\label{eq:inner_product}
\frac{\bra{\psi_C}\rho^\prime_{\mathrm{noise}}\ket{\psi_C}}{\tr(\bra{\psi_C}\rho^\prime_{\mathrm{noise}}\ket{\psi_C})}=\frac{3p^2+p}{3p^2+1}
\begin{bmatrix}
    \cos^2{\theta} & 0 & 0 & \frac{1}{2}\sin{2\theta}e^{i\phi}\\
    0 & 0 & 0 & 0 \\
    0 & 0 & 0 & 0 \\
    \frac{1}{2}\sin{2\theta}e^{-i\phi} & 0 & 0 & \sin^2{\theta}
\end{bmatrix}
+ \frac{p-p^2}{3p^2+1}
\begin{bmatrix}
    \cos^2{\theta} & 0 & 0 & -\frac{1}{2}\sin{2\theta}e^{i\phi}\\
    0 & 0 & 0 & 0 \\
    0 & 0 & 0 & 0 \\
    -\frac{1}{2}\sin{2\theta}e^{-i\phi} & 0 & 0 & \sin^2{\theta}
\end{bmatrix}
+ \frac{(1-p)^2}{4(3p^2+1)}\mathbb{I}_2.
\end{equation}
To decide whether this state is entangled or not, we can take a partial transposition of this density matrix and calculate its lowest eigenvalue. The partial-transposed matrix is
\begin{equation}
\left[\frac{\bra{\psi_C}\rho^\prime_{\mathrm{noise}}\ket{\psi_C}}{\tr(\bra{\psi_C}\rho^\prime_{\mathrm{noise}}\ket{\psi_C})}\right]^{\mathrm{T}_B}=
\begin{bmatrix}
    \frac{2(p^2+p)}{3p^2+1}\cos^2{\theta} & 0 & 0 & 0\\
    0 & 0 & \frac{2p^2}{3p^2+1}\sin{2\theta}e^{-i\phi} & 0 \\
    0 & \frac{2p^2}{3p^2+1}\sin{2\theta}e^{i\phi} & 0 & 0 \\
    0 & 0 & 0 & \frac{2(p^2+p)}{3p^2+1}\sin^2{\theta}
\end{bmatrix}
+ \frac{(1-p)^2}{4(3p^2+1)}\mathbb{I}_2,
\end{equation}
whose lowest eigenvalue is 
\begin{equation}
    \lambda_{\min}\left\{\left[\frac{\bra{\psi_C}\rho^\prime_{\mathrm{noise}}\ket{\psi_C}}{\tr(\bra{\psi_C}\rho^\prime_{\mathrm{noise}}\ket{\psi_C})}\right]^{\mathrm{T}_B}\right\}=\frac{(1-p)^2}{4(3p^2+1)}-\frac{2p^2}{3p^2+1}\sin{2\theta}\ge\frac{(1-p)^2}{4(3p^2+1)}-\frac{2p^2}{3p^2+1}.
\end{equation}
If we require the lowest eigenvalue to be lower than 0, we have $p>\frac{2\sqrt{2}-1}{7}\sim0.2612$. So in the region of $p\in (\frac{2\sqrt{2}-1}{7},\frac{1}{3})$, two Werner states can be collected to activate SLE based on our distillation protocol. Similarly, $p=\frac{2\sqrt{2}-1}{7}$ may not be the lowest value for SLE superactivation with two copies, as we only consider a special distillation scheme here.

The overall fidelity is also a crucial indicator for our entanglement distillation and localization protocols.
With the analysis in this section, we can derive analytical forms.
Using Eq.~\eqref{eq:dis_werner}, the fidelity between the distilled state and the GHZ state is $F_1=\frac{25p^2+6p+1}{8(3p^2+1)}$.
Using Eq.~\eqref{eq:inner_product}, the fidelity between the distilled and localized state and the Bell state $\left(\ket{00}+\ket{11}\right)/\sqrt{2}$ is $F_2=\frac{13p^2+2p+1}{4(3p^2+1)}$. 

To clarify the conclusions, we use Fig.~\ref{Fig:ent_level} to summarize ranges for different properties of the noisy GHZ state. In this diagram, we use $p_{\mathrm{GME}}^{(2)}$ and $p_{\mathrm{SLE}}^{(2)}$ to label the thresholds for using two copies of noisy GHZ states to active GME and SLE with our distillation protocol. Since it is only a special distillation protocol, these two values are upper bounds for entanglement superactivation thresholds with two copies. A meaningful problem is the derivation of the exact thresholds for SLE and GME superactivation with a finite number of copies. Besides, it has been proved that a multipartite state can always be used to activate GME given enough copies if and only if it is not separable in any fixed bi-partition\cite{Palazuelos2022genuinemultipartite}. This means that $p_{\mathrm{GME}}^{(\infty)}=0.2$ for the noisy GHZ state. Consequently, another important problem is that what is $p_{\mathrm{SLE}}^{(\infty)}$? In what condition can SLE not be activated even given an arbitrary number of the multipartite state? It is evident that a fully separable state has no SLE, so $0.2$ is a lower bound for $p_{\mathrm{SLE}}^{(\infty)}$.

\begin{figure}[htbp!]
\centering
\includegraphics[width=0.7 \linewidth]{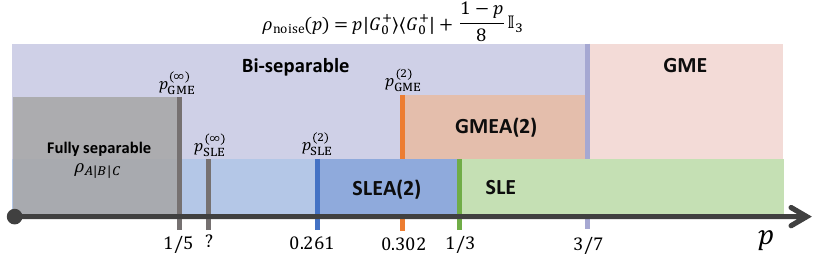}
\caption{Ranges of different properties for the noisy GHZ state.}
\label{Fig:ent_level}
\end{figure}

\subsection{Noisy W State} \label{sec:W_activation}
Till now, we only focus on noisy GHZ state, the noisy GHZ state. 
In this section, we would like to show if the noisy W state has the entanglement superactivation phenomenon,
\begin{equation}
\rho_{\mathrm{W}}=p\ketbra{\mathrm{W}}{\mathrm{W}}+\frac{1-p}{8}\mathbb{I}_3,
\end{equation}
where $\ket{\mathrm{W}}=\frac{1}{\sqrt{3}}\left(\ket{001}+\ket{010}+\ket{100}\right)$.
It is known that, all three-qubit genuinely entangled state can be classified into either GHZ or W classes.
Thus, it is meaningful to discuss the superactivation phenomenon in both noisy GHZ and W states.

We here consider the CNOT gate-based entanglement distillation protocol, which is shown in Fig.~\ref{fig:W_state}.
Take the two-qubit case as an example, each qubit of $\rho$ is transmitted to a client.
Then, each client perform a CNOT gate on its two qubit, and measure one of them in computational basis.
When two measurement results are all $\ket{0}$, the other two qubits will be kept.
If some client get the measurement result of $\ket{1}$, the other two qubits will be discarded.
Note that $\mathrm{CNOT}=\ketbra{0}{0}\otimes\mathbb{I}+\ketbra{1}{1}\otimes X$, after post-selection, the operator becomes 
\begin{equation}
P_0=(\mathbb{I}\otimes\bra{0})\mathrm{CNOT}=\ketbra{0}{0}\otimes\bra{0}\mathbb{I}+\ketbra{1}{1}\otimes \bra{0}X=\ketbra{0}{00}+\ketbra{1}{11}.
\end{equation}
Thus, after this distillation process, the post-selected state becomes $\frac{P_0^{\otimes 2}\rho^{\otimes 2}P_0^{\dagger\otimes 2}}{\Tr(P_0^{\otimes 2}\rho^{\otimes 2}P_0^{\dagger\otimes 2})}$.
Mathematically speaking, the matrix element of the distilled state is the normalized square of the state before distillation.
Actually, one can notice that the distillation protocol we employ for noisy GHZ state achieves similar function by comparing $\rho_{\mathrm{noise}}$ and $\rho_{\mathrm{noise}}^{'}$.

\begin{figure}[htbp]\label{fig:W_state}
\centering
\includegraphics[width=0.3\linewidth]{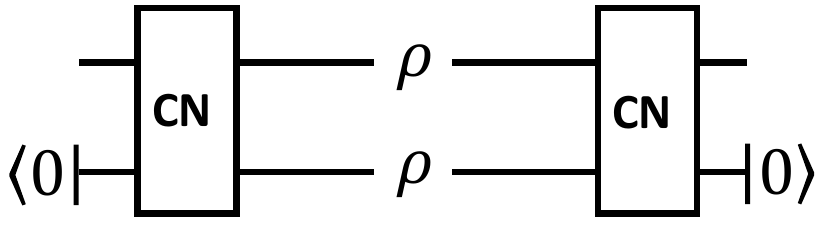}
\caption{The entanglement distillation protocol based on CNOT (CN) gates.}
\label{fig:W_state}
\end{figure}

Given the matrix form of the noisy W state
\begin{equation}
\rho_\mathrm{W}=
\begin{bmatrix}
\frac{1-p}{8} & 0 & 0 & 0 & 0 & 0 & 0 & 0\\
0 & \frac{3+5p}{24} & \frac{p}{3} & 0 & \frac{p}{3} & 0 & 0 & 0\\
0 & \frac{p}{3} & \frac{3+5p}{24} & 0 & \frac{p}{3} & 0& 0 & 0\\
0 & 0 & 0 & \frac{1-p}{8} & 0 & 0 & 0 & 0\\
0 & \frac{p}{3} & \frac{p}{3} & 0 & \frac{3+5p}{24} & 0 & 0 & 0\\
0 & 0 & 0 & 0 & 0 & \frac{1-p}{8} & 0 & 0\\
0 & 0 & 0 & 0 & 0 & 0 & \frac{1-p}{8} & 0\\
0 & 0 & 0 & 0 & 0 & 0 & 0 & \frac{1-p}{8}
\end{bmatrix},
\end{equation}
the distilled state is 
\begin{equation}
\rho_\mathrm{W}^{'}=
\frac{24}{5p^2+3}
\begin{bmatrix}
\frac{(1-p)^2}{64} & 0 & 0 & 0 & 0 & 0 & 0 & 0\\
0 & \frac{(3+5p)^2}{24^2} & \frac{p^2}{9} & 0 & \frac{p^2}{9} & 0 & 0 & 0\\
0 & \frac{p^2}{9} & \frac{(3+5p)^2}{24^2} & 0 & \frac{p^2}{9} & 0& 0 & 0\\
0 & 0 & 0 & \frac{(1-p)^2}{64} & 0 & 0 & 0 & 0\\
0 & \frac{p^2}{9} & \frac{p^2}{9} & 0 & \frac{(3+5p)^2}{24^2} & 0 & 0 & 0\\
0 & 0 & 0 & 0 & 0 & \frac{(1-p)^2}{64} & 0 & 0\\
0 & 0 & 0 & 0 & 0 & 0 & \frac{(1-p)^2}{64} & 0\\
0 & 0 & 0 & 0 & 0 & 0 & 0 & \frac{(1-p)^2}{64}
\end{bmatrix}.
\end{equation}
When the value of $p$ is large, this entanglement distillation protocol will increase the fidelity between this state and the W state.
If using the easiest way, the entanglement witness of $W_{\mathrm{W}}=\frac{2}{3}\mathbb{I}_3-\ketbra{\mathrm{W}}{\mathrm{W}}$, one can observe ``GME superactivation''.
For example, setting $p=0.6$, the expectation value of noisy W state is positive while the value for distilled state is negative.
However, for noisy W state, the entanglement witness is not the necessary and sufficient condition for GME.
Note that these both noisy W state and distilled state are all permutation-invariant three-qubit states, whose GME can be determined by the PPT-mixture criterion shown in Eq.~\eqref{eq:PPT_optimization}~\cite{jungnitsch2011taming,novo2013permutation}.
Through numerical test, we show that $\rho_{\mathrm{W}}$ is bi-separable when the value of $p$ is decreased to around $0.479$. This is actually the optimal threshold for the bi-separability of the noisy W state, which has been rigorously proven in Ref.~\cite{jungnitsch2011taming}.
At the same time, the distilled state is bi-separable when the value of $p$ is decreased to around $0.519$, showing the absence of GME superactivation.

To analyze the SLE, note that the optimal way to localize entanglement for noisy W state is to measure one qubit in computational basis and post-select the measurement result of $\ket{0}$.
Then, the post-selected state becomes
\begin{equation}
p\bra{0}\ketbra{\mathrm{W}}{\mathrm{W}}\ket{0}+\frac{1-p}{8}\mathbb{I}_2=\frac{p}{3}\left(\ket{01}+\ket{10}\right)\left(\bra{01+\bra{10}}\right)+\frac{1-p}{8}\mathbb{I}_2.
\end{equation}
After normalization and local transformation, the state becomes
\begin{equation}
\frac{4p}{p+3}\ketbra{\mathrm{EPR}}{\mathrm{EPR}}+\frac{3(1-p)}{4(p+3)}\mathbb{I}_2.
\end{equation}
When $p\ge\frac{3}{11}$, the noisy W state has SLE as the fidelity between the above state and the EPR state exists 0.5.
The distilled state can be rewritten as
\begin{equation}
\rho_\mathrm{W}^{'}=\frac{8p^2}{5p^2+3}\ketbra{\mathrm{W}}{\mathrm{W}}+\frac{3(1-p)^2}{8(5p^2+3)}\mathbb{I}_3+\frac{2p(1-p)}{5p^2+3}\left(\ketbra{001}{001}+\ketbra{010}{010}+\ketbra{100}{100}\right).
\end{equation}
According to the symmetry, measuring computational basis and post-select the result of $\ket{0}$ is also the optimal way to localize entanglement.
The resultant unnormalized state is 
\begin{equation}
\bra{0}\rho_{\mathrm{W}}^{'}\ket{0}=\frac{8p^2}{3(5p^2+3)}\left(\ket{01}+\ket{10}\right)\left(\bra{01}+\bra{10}\right)+\frac{3(1-p)^2}{8(5p^2+3)}\mathbb{I}_2+\frac{2p(1-p)}{5p^2+3}\left(\ketbra{01}{01}+\ketbra{10}{10}\right).
\end{equation}
Applying a Pauli-$X$ gate on the second qubit, this state will be transformed into 
\begin{equation}
(\mathbb{I}\otimes X)\bra{0}\rho_{\mathrm{W}}^{'}\ket{0}(\mathbb{I}\otimes X)=\frac{8p^2}{3(5p^2+3)}\left(\ket{00}+\ket{11}\right)\left(\bra{00}+\bra{11}\right)+\frac{3(1-p)^2}{8(5p^2+3)}\mathbb{I}_2+\frac{2p(1-p)}{5p^2+3}\left(\ketbra{00}{00}+\ketbra{11}{11}\right).
\end{equation}
Now we use the PPT criterion to test the entanglement of this state.
The partial transposed matrix becomes
\begin{equation}
\left[(\mathbb{I}\otimes X)\bra{0}\rho_{\mathrm{W}}^{'}\ket{0}(\mathbb{I}\otimes X)\right]^{\mathrm{T}_B}=
\begin{bmatrix}
\frac{8p^2}{3(5p^2+3)}+\frac{3(1-p)^2}{8(5p^2+3)}+\frac{2p(1-p)}{5p^2+3} & 0 & 0 & 0\\
0 & \frac{3(1-p)^2}{8(5p^2+3)} & \frac{8p^2}{3(5p^2+3)} & 0\\
0 & \frac{8p^2}{3(5p^2+3)} & \frac{3(1-p)^2}{8(5p^2+3)} & 0\\
0 & 0 & 0 & \frac{8p^2}{3(5p^2+3)}+\frac{3(1-p)^2}{8(5p^2+3)}+\frac{2p(1-p)}{5p^2+3}
\end{bmatrix}.
\end{equation}
Note that when the value of $\frac{8p^2}{3(5p^2+3)}$ is larger than $\frac{3(1-p)^2}{8(5p^2+3)}$, the partial transposed matrix is not semi-positive.
One can verify that similarly when $p>\frac{3}{11}$, the above matrix is not semi-positive.
Thus, the noisy W state does not possess SLE superactivation phenomenon.

Through this analysis, we found that compared with noisy GHZ state, the noisy W state is hard to demonstrate the entanglement superactivation for both GME and SLE.
This result shows an intriguing relationship between entanglement superactivation and state inter-convertibility under SLOCC operations.
It is natural to ask whether a better distillation protocol exists that can demonstrate the entanglement superactivation with noisy W state (for a detailed discussion see Ref.~\cite{weinbrenner2024superactivation}); or this is the inherent property of W state.
Besides, as multipartite entanglement has a much more complicated structure, it is worthy to explore the entanglement superactivation in larger systems.

\section{Difference between SLE and GME}
In Sec.~\ref{subsec:noisy_GHZ}, we use the noisy GHZ state to demonstrate the existence of states possessing SLE but not GME. 
This naturally raises the question of whether there also exist states with only GME, but not SLE.
Ref.~\cite{gunhe2016emergent} provides an affirmative answer by numerically identifying a three-qubit genuinely multipartite entangled state whose two-qubit reduced state, after any projective measurement on the remaining qubit, is always separable. 
Combined with Theorem~\ref{theorem:main}, this implies that the identified three-qubit state exhibits GME without SLE. Therefore, SLE and GME should be regarded as two distinct notions of multipartite entanglement, without any inclusion relation between them.

\subsection{SLE superactivation and the PPT square conjecture}
After clarifying the relationship between SLE and GME, it is natural and important to further examine the connection between SLE superactivation and GME superactivation. It is known that for a multipartite quantum state $\rho$ that is not separable in any fixed bipartition, its GME can always be activated given sufficiently many copies~\cite{Palazuelos2022genuinemultipartite}. Equivalently, for GME, a multipartite state can be activated if and only if it is not separable in some fixed bipartition. This raises the question of whether the same condition is also necessary and sufficient for SLE superactivation. If not, can we identify multipartite quantum states that are entangled across every bipartition, but yet cannot be activated to be SLE?

The PPT square conjecture, if it is assumed to be correct, provides such a counterexample. 
The conjecture was, to our knowledge
first formulated by M. Christandl and states that the concatenation of two PPT channels is entanglement-breaking~\cite{ruskai2012operator,chen2019pptsquare}. Using the Choi isomorphism, this is equivalent to saying that for two arbitrary bipartite PPT quantum states $\rho_{AB}$ and $\rho_{CD}$, the state
\begin{equation}
\sigma_{AD}=\bra{\Phi^+_{BC}}\left(\rho_{AB}\otimes\rho_{CD}\right)\ket{\Phi^+_{BC}},
\end{equation}
where $\ket{\Phi_{BC}^+}$ is the maximally entangled state defined on parties $B$ and $C$, is a separable state on parties $A$ and $D$.
Note that $\rho_{AB}$ and $\rho_{CD}$ may themselves be PPT entangled states. 
Finally, one may also rephrase the PPT square conjecture as the statement that PPT entangled states are useless for entanglement swapping.

When we change the maximally entangled state $\ket{\Phi^+_{BC}}$ to some other pure state $\ket{\Psi_{BC}}$, the resultant state is also proportional to a separable state.
This can be proved from the simple observation that any bipartite pure state can be written in the form $\ket{\Psi_{BC}}=\mathbb{I}_B\otimes K_C\ket{\Phi_{BC}^+}$, where $K_C$ is a general matrix acting on system $C$.
Then, we have
\begin{equation}\label{eq:ppt_psi}
\bra{\Psi_{BC}}\left(\rho_{AB}\otimes\rho_{CD}\right)\ket{\Psi_{BC}}=\bra{\Phi^+_{BC}}\left[\rho_{AB}\otimes\left(K_C\otimes \mathbb{I}_D\right)\rho_{CD}\left(K_C^\dagger\otimes \mathbb{I}_D\right)\right]\ket{\Phi^+_{BC}}.
\end{equation}
It is easy to check that, when $\rho_{CD}$ is a PPT state, then $\left(K_C\otimes \mathbb{I}_D\right)\rho_{CD}\left(K_C^\dagger\otimes \mathbb{I}_D\right)$ is also proportional to a PPT state.
Therefore, assuming the correctness of the PPT square conjecture, $\bra{\Psi_{BC}}\left(\rho_{AB}\otimes\rho_{CD}\right)\ket{\Psi_{BC}}$ is proportional to a separable state for arbitrary $\ket{\Psi_{BC}}$.

Based on this observation, we can construct a state that is not separable in any fixed bi-partition while cannot be activated in SLE.
Consider a three-partite state $\rho_{ABC}=\sigma_{AB_1}\otimes \sigma_{B_2C}$, where $B_1$ and $B_2$ are two parties of system $B$ and $\sigma_{AB_1}$ and $\sigma_{B_2C}$ are PPT  entangled states.
It can be easily verified that $\rho_{ABC}$ is entangled in any bi-partition as $\sigma_{AB_1}$ and $\sigma_{B_2C}$ are entangled states.
When preparing many copies of it, the joint state can be written in the form of $\rho_{ABC}^{\otimes t}=\sigma_{AB_1}^{\otimes t}\otimes \sigma_{B_2C}^{\otimes t}$.
As the partial transposition operation is tensor stable, the joint states $\sigma_{AB_1}^{\otimes t}$ and $\sigma_{B_2C}^{\otimes t}$ are also PPT in the partition of $A|B_1$ and $B_2|C$.
Combining the PPT square conjecture and the observation made in Eq.~\eqref{eq:ppt_psi}, we thus have
\begin{equation}
    \sigma_{AC}=\frac{\bra{\Psi_{B}}\rho_{ABC}^{\otimes t}\ket{\Psi_{B}}}{\mathrm{Tr}\bra{\Psi_{B}}\rho_{ABC}^{\otimes t}\ket{\Psi_{B}}}
\end{equation}
is a separable state for arbitrary $t$ and $\ket{\Psi_{B}}$.
According to Theorem~\ref{theorem:main}, this means that $\rho_{ABC}^{\otimes t}$ does not have SLE for arbitrary value of $t$.
Therefore, we have proven that, if the PPT square conjecture is correct, there exist some multipartite state that is not separable in any bipartition while cannot be SLE activated.

A corollary of this result is that SLE superactivation on all bipartite subsystems is strictly more difficult than GME superactivation. 
This is because a state can only exhibit SLE superactivation on all bipartitions if it is not separable across any fixed bipartition, a necessary and sufficient condition for GME superactivation. 
Here, SLE superactivation on all bipartitions means there exists an integer $t$ such that $\rho^{\otimes t}$ has SLE on every bipartite subsystem.

Another direct corollary is that there exist genuinely multipartite entangled state that cannot be activated in SLE.
We can still focus on the state mentioned before, $\rho_{ABC}$, which is not separable under any fixed bipartition.
According to the result of Ref.~\cite{Palazuelos2022genuinemultipartite}, there exists a value of $t_1$, such that $\rho_{ABC}^{\otimes t_1}$ is genuinely multipartite entangled.
While, as $\rho_{ABC}^{\otimes t_1}$ can also be written as the tensor product of two PPT entangled states, it cannot be SLE activated. 

\section{Experimental details}
\subsection{Analysis of experimental distillation for noisy GHZ states}\label{sec:distillation_protocol}
Previous theoretical and experimental works have demonstrated that bipartite entanglement distillation in polarization degrees of freedom of photons can be achieved with polarizing beam splitters (PBSs) \cite{Pan2001purification,Pan2003purification,chen2017experimental}.
Here, we present an analysis of the experimental tripartite distillation process for the three-photon noisy GHZ state prepared in polarization degrees of freedom.

The to-be-distilled noisy GHZ state can be decomposed into a mixture of eight GHZ states in Eq.~\eqref{eq:eight_GHZ},
where states $\ket{0}$ and $\ket{1}$ are encoded in $\ket{H}$ and $\ket{V}$ polarization of photons, respectively.
Proportions of all these component states are $F_{0}^{+}=\frac{1+7p}{8}$ and $F_{0}^{-}=F_{1}^{+}=...=F_{3}^{-}=\frac{1-p}{8}=F_{r}$, which are also fidelities between the noisy GHZ state and these component states. 
Among them, the state $\ket{G_{0}^{+}}$, with the largest fidelity of $F_{0}^{+}$, is the target state for the entanglement distillation.

As illustrated in Figure~2 in the main text, to perform the distillation operation, three clients superimpose their two photons on a PBS and post-select events where there is exactly one photon in each output of PBS. 
This requires the two photons to be in the same polarization, i.e., both $\ket{0}$ or both $\ket{1}$. 
As a result, the PBS operation, together with the post-selection, can be described using the projection operator $P=\ketbra{00}{00}+\ketbra{11}{11}$.
The tripartite distillation allows two component states of two to-be-distilled noisy GHZ states to meet randomly on PBSs. 
From the polarization distribution of eight component states in Eq.~\eqref{eq:eight_GHZ}, we can see that the post-selection condition is satisfied only when the subscripts of two meeting component states are consistent, $\ket{G_{i}^{+}}$ and $\ket{G_{i}^{+}}$, $\ket{G_{i}^{+}}$ and $\ket{G_{i}^{-}}$, or $\ket{G_{i}^{-}}$ and $\ket{G_{i}^{-}}$.
Other possible combinations of component states with different subscripts will be eliminated by post-selection, such as $\ket{G_0^+}$ meets $\ket{G_1^+}$.
This can be verified mathematically by $P^{\otimes 3}\ket{G_0^+}\ket{G_1^+}=0$.

After the post-selection, each client measures one of two output photons on the Pauli-$X$ basis and compares the result. 
The Pauli-$X$ measurement transforms the projection operator into either $P_+=\bra{+}P=\frac{1}{\sqrt{2}}(\ketbra{0}{00}+\ketbra{1}{11})$ or $P_-=\bra{-}P=\frac{1}{\sqrt{2}}(\ketbra{0}{00}-\ketbra{1}{11})$.
As stated in the main text, three clients keep the resultant three-photon state when an even number of $\ket{-}$ is recorded and apply a phase-flip operation when an odd number of $\ket{-}$ is recorded.
Therefore, if $\ket{G_i^+}$ meets $\ket{G_i^+}$ or $\ket{G_i^-}$ meets $\ket{G_i^-}$ on PBSs, the resultant state will be $\ket{G_i^+}$; 
while if $\ket{G_i^+}$ meets $\ket{G_i^-}$, the resultant state will be $\ket{G_i^-}$.
This can be verified with 
\begin{equation}
\begin{aligned}
&(P_+P_+P_+ + P_+P_-P_- + P_-P_+P_- + P_-P_-P_+)\ket{G_i^+}\otimes\ket{G_i^+}\propto\ket{G_i^+},\\
&(P_+P_+P_+ + P_+ P_- P_- + P_- P_+ P_- + P_- P_- P_+)\ket{G_i^+}\otimes\ket{G_i^-}\propto\ket{G_i^-},\\
&(P_- P_- P_- + P_- P_+ P_+ + P_+ P_- P_+ + P_+ P_+ P_-)\ket{G_i^+}\otimes\ket{G_i^+}\propto\ket{G_i^-},\\
&(P_- P_- P_- + P_- P_+ P_+ + P_+ P_- P_+ + P_+ P_+ P_-)\ket{G_i^+}\otimes\ket{G_i^-}\propto\ket{G_i^+}.
\end{aligned}
\end{equation}

Thus, after this whole distillation process, one obtains a new mixed state, which is a mixture of the same component states with new proportions $F_{0}^{+\prime}$, $F_{0}^{-\prime}$, $\cdots$, $F_{3}^{-\prime}$.
For example, the fidelity for the target component state $\ket{G_{0}^{+}}$ becomes
\begin{equation}
F_{0}^{+\prime}=\frac{(F_{0}^{+})^2 +(F_{0}^{-})^2 }{ (F_{0}^{+}+F_{0}^{-})^2+3\times (2 F_{r})^2}
=\frac{(\frac{1+7p}{8})^2 +(\frac{1-p}{8})^2 }{ (\frac{1+7p}{8}+\frac{1-p}{8})^2+12\times (\frac{1-p}{8})^2}
=\frac{25p^2+6p+1}{24p^2+8},
\label{eq:F'}
\end{equation}
where the nominator is the probability of cases that $\ket{G_0^+}$ meets $\ket{G_0^+}$ and $\ket{G_0^-}$ meets $\ket{G_0^-}$, and the denominator represents the probability of the success of post-selection.
One can prove that, for $p\in (\frac{4\sqrt{3}-3}{13}, \frac{3}{7}) \sim(0.3022,0.4286)$, it holds that ${F_{0}^{+}}<0.5$ while $F_{0}^{+\prime}>0.5$. 
This means the distilled state is genuinely entangled, while the to-be-distilled states are bi-separable. 
In other words, one can obtain genuine tripartite entanglement from two non-genuinely-entangled noisy GHZ states. This conclusion and the corresponding value interval are consistent with the theoretical analysis in Sec.~\ref{sec:ideal_Werner}.

\begin{figure}[htbp!]
	\centering
	\includegraphics[width=0.8 \linewidth]{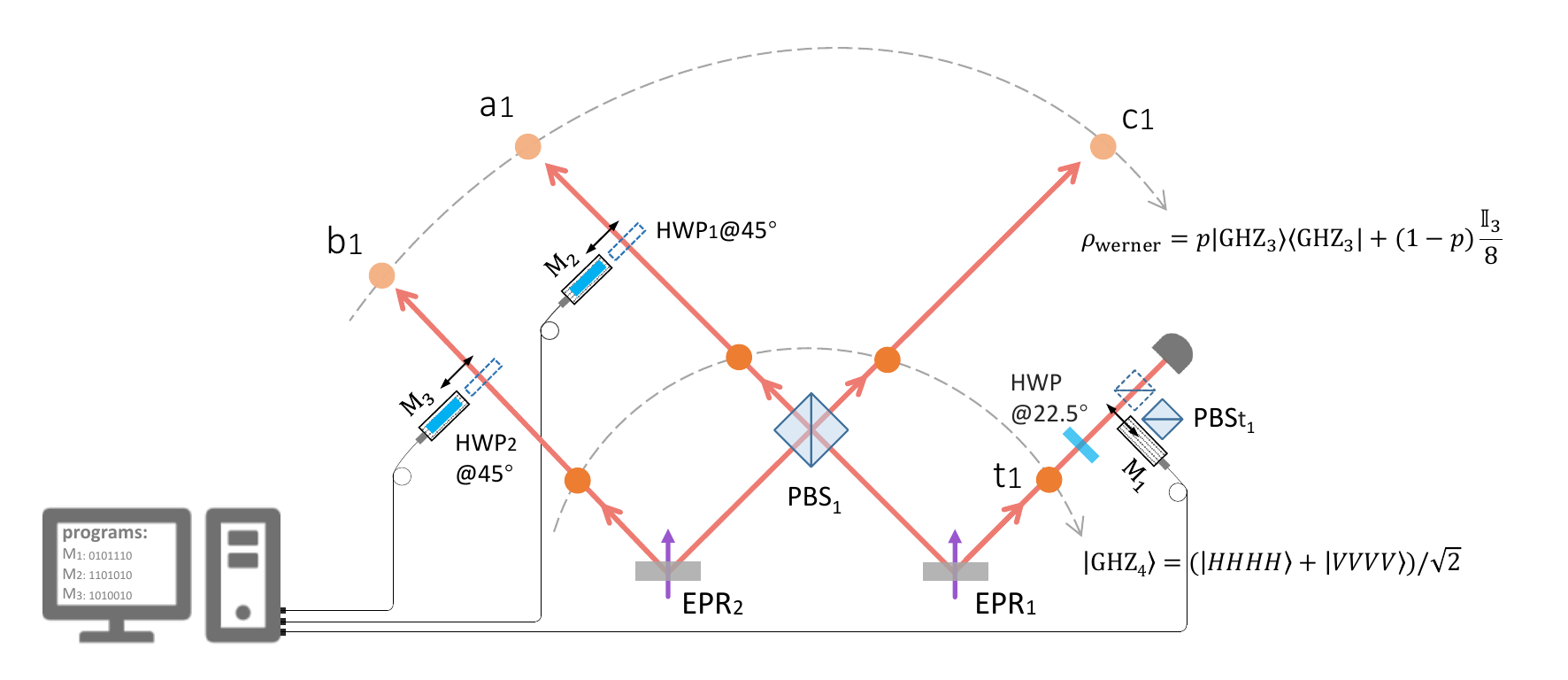}
	\caption{Schematic diagram for the preparation of the to-be-distilled noisy GHZ state.}
	\label{Fig:PreparationWerner}
\end{figure}

\subsection{Generation procedure of noisy GHZ states} \label{sec:preparation}

Here, we describe the preparation of the noisy GHZ state in our experiment, using $\rho_{1}$ as an example. 
The procedure is illustrated in Fig.~\ref{Fig:PreparationWerner}. 
First, we generate a four-photon GHZ state $\ket{\text{GHZ}_4}=\frac{1}{\sqrt{2}}(\ket{HHHH}+\ket{VVVV})$ by interfering photons from EPR$_1$ and EPR$_2$ on PBS$_1$.
Based on this state, we perform different measurements on the heralding photon t$_1$ to prepare different component states of the noisy GHZ state on photons a$_1$-b$_1$-c$_1$. 
Specifically, by introducing a HWP@22.5$\degree$ in the light path of t$_1$ in advance, the following two scenarios arise:

(1) When PBS$_{\text{t}_1}$ is in the light path, and HWP$_1$@$45\degree$ and HWP$_2$@$45\degree$ are out of light paths, the heralding photon t$_1$ will be triggered to state $\ket{+}$, and three photons a$_1$-b$_1$-c$_1$ will be prepared as $\ket{G_{0}^{+}}=\frac{1}{\sqrt{2}}(\ket{HHH}+\ket{VVV})$. 
This is the pure state part of the noisy GHZ state (See row 1 of Table~\ref{tab:preparation}).

(2) When PBS$_{\text{t}_1}$ is removed from the light path, the heralding photon t$_1$ directly enters the detector without polarization identification and is triggered to a mixed state $\rho_{\text{t}_1}=\frac{1}{2}(\ketbra{+}+\ketbra{-})$. 
In this case, photons a$_1$-b$_1$-c$_1$ are prepared as the mixed state $\rho_{G_{0}}=\frac{1}{2}\ketbra{HHH}+\frac{1}{2}\ketbra{VVV}=\frac{1}{2}\ketbra{G_{0}^{+}}+\frac{1}{2}\ketbra{G_{0}^{-}}$. 
On this basis, we randomly insert two HWP@45$\degree$ into light paths of photons a$_1$ and b$_1$ with the probability of 0.5, yielding three other mixed states with bit flip, $\rho_{G_{1}}=\frac{1}{2}\ketbra{G_{1}^{+}}+\frac{1}{2}\ketbra{G_{1}^{-}}$, $\rho_{G_{2}}=\frac{1}{2}\ketbra{G_{2}^{+}}+\frac{1}{2}\ketbra{G_{2}^{-}}$, and $\rho_{G_{3}}=\frac{1}{2}\ketbra{G_{3}^{+}}+\frac{1}{2}\ketbra{G_{3}^{-}}$. 
By randomly and equally preparing these four mixed states on photons a$_1$-b$_1$-c$_1$, we effectively prepare the three-photon maximally mixed state $\frac{\mathbb{I}_3}{8}$.
This is the maximally mixed state part of the noisy GHZ state (See rows 2-5 of Table~\ref{tab:preparation}).

By combining above scenarios randomly with a specific probability relevant to the parameter $p$, the noisy GHZ state $\rho_{\mathrm{noise}}=p \ketbra{G_{0}^{+}}+(1-p)\frac{\mathbb{I}_3}{8}$ can be prepared. 
The detailed experimental settings are listed in Table~\ref{tab:preparation}. 
The movement of the HWPs and PBSs is controlled by a set of stepping motors, which are programmed by a central control computer to ensure the optimal mixture of the component states.
In this way, given an average generation rate $g$ for EPR sources, the generation probability of the noisy GHZ state is calculated as $\frac{g^2}{2(1+p)}$, primarily limited by the SPDC process.


\begin{table}[!htbp]
	\renewcommand{\arraystretch}{1.5}
	\normalsize 
	\centering
	\begin{tabular}{c|c c c|c|c|c|c}
	\hline
    \hline
    \multicolumn{4}{c|}{~\textbf{Component states of the noisy GHZ state}~} & \multicolumn{4}{c}{~~\textbf{Stepping motor group settings}~~}\\
    \cline{5-8}	
	\multicolumn{4}{c|}{~$\rho_\mathrm{noise}=p\ketbra{\mathrm{GHZ}_3}{\mathrm{GHZ}_3}+(1-p)\frac{\mathbb{I}_3}{8}$~} & {\textbf{M1:PBS$_\text{t1}$}} & {\textbf{M2:HWP$_1$@$45\degree$}} & {\textbf{M3:HWP$_2$@$45\degree$}} & {\textbf{Probability}}\\
        \hline
        $\ket{\mathrm{GHZ}_3}$ & \multicolumn{3}{c|}{$\ket{G_{0}^{+}}=\frac{1}{\sqrt{2}}(\ket{HHH}+\ket{VVV})$} & ${\textit{\color{black} in}}$ & ${\textit{\color{black} out}}$ & ${\textit{\color{black} out}}$ & \textbf{\color{black} $p'=\frac{2p}{1+p}$}\\
		\hline
        \multirow{8}{*}{\Large{$\frac{\mathbb{I}_3}{8}$}} & \multirow{1}{*}{$\rho_{G_{0}}=\frac{1}{2}\ketbra{HHH}$}  & \multirow{2}{*}{$\Rightarrow$} & \multirow{1}{*}{$\rho_{G_{0}}=\frac{1}{2}\ketbra{G_{0}^{+}}$} & \multirow{2}{*}{\textit{\color{black} out}} & \multirow{2}{*}{\textit{\color{black} out}} & \multirow{2}{*}{\textit{\color{black} out}} & \multirow{2}{*}{\textbf{\color{black} $\frac{1-p'}{4}$}}\\
         & \multirow{1}{*}{$~~~~~+\frac{1}{2}\ketbra{VVV}$} &  & \multirow{1}{*}{$~~~~~~+\frac{1}{2}\ketbra{G_{0}^{-}}$} &  &  &  & \\
         \cline{2-8}	
         & \multirow{1}{*}{$\rho_{G_{1}}=\frac{1}{2}\ketbra{HHV}$}  & \multirow{2}{*}{$\Rightarrow$} & \multirow{1}{*}{$\rho_{G_{1}}=\frac{1}{2}\ketbra{G_{1}^{+}}$} & \multirow{2}{*}{\textit{\color{black} out}} & \multirow{2}{*}{\textit{\color{black} in}} & \multirow{2}{*}{\textit{\color{black} out}} & \multirow{2}{*}{\textbf{\color{black} $\frac{1-p'}{4}$}} \\
         & \multirow{1}{*}{$~~~~~+\frac{1}{2}\ketbra{VVH}$} &  & \multirow{1}{*}{$~~~~~~+\frac{1}{2}\ketbra{G_{1}^{-}}$} &  &  &  & \\
         \cline{2-8}	
         & \multirow{1}{*}{$\rho_{G_{2}}=\frac{1}{2}\ketbra{HVH}$}  & \multirow{2}{*}{$\Rightarrow$} & \multirow{1}{*}{$\rho_{G_{2}}=\frac{1}{2}\ketbra{G_{2}^{+}}$} & \multirow{2}{*}{\textit{\color{black} out}} & \multirow{2}{*}{\textit{\color{black} out}} & \multirow{2}{*}{\textit{\color{black} in}} &  \multirow{2}{*}{\textbf{\color{black} $\frac{1-p'}{4}$}}\\
         & \multirow{1}{*}{$~~~~~+\frac{1}{2}\ketbra{VHV}$} &  & \multirow{1}{*}{$~~~~~~+\frac{1}{2}\ketbra{G_{2}^{-}}$} &  &  &  & \\
         \cline{2-8}		         
         & \multirow{1}{*}{$\rho_{G_{3}}=\frac{1}{2}\ketbra{VHH}$}  & \multirow{2}{*}{$\Rightarrow$} & \multirow{1}{*}{$\rho_{G_{3}}=\frac{1}{2}\ketbra{G_{3}^{+}}$} & \multirow{2}{*}{\textit{\color{black} out}} & \multirow{2}{*}{\textit{\color{black} in}} & \multirow{2}{*}{\textit{\color{black} in}} &  \multirow{2}{*}{\textbf{\color{black} $\frac{1-p'}{4}$}} \\
         & \multirow{1}{*}{$~~~~~+\frac{1}{2}\ketbra{HVV}$} &  & \multirow{1}{*}{$~~~~~~+\frac{1}{2}\ketbra{G_{3}^{-}}$} &  &  &  & \\
         \cline{2-8}	     
        \hline
             \hline
	\end{tabular}
    \caption{Experimental settings for preparation of the noisy GHZ state. ``\textit{in}" denotes that the optical component is inserted into the light paths, while ``\textit{out}" denotes that the optical component is removed from the light paths.}
    \label{tab:preparation}
\end{table}

\subsection{Exclusion of same-order double-pair emission}
\label{sec:double_pair}
The faithful entanglement distillation demonstration requires preparing two copies of noisy GHZ states using four EPR sources, each generating exactly one photon pair, denoted as $g_{1}$-$g_{2}$-$g_{3}$-$g_{4}$ (See rows 1-2 of Table~\ref{tab:double_pair}).
Success events are registered as eight-body coincidence counts, where exactly one photon is detected at each of the eight detection terminals simultaneously. 
However, the probabilistic nature of SPDC processes may lead to same-order double-pair emission events, where some EPR sources yield two photon pairs while some yield none.
Some of these events, such as  $g^{2}_{1}$-$0$-$g_{3}$-$g_{4}$ and $g_{1}$-$g^{2}_{2}$-$0$-$g_{4}$, have the same probability as the ideal case and may be registered, thus deviating the experiment from a faithful distillation procedure \cite{Pan2003purification,chen2017experimental}.

In our experiment, as depicted in Fig.~\ref{Fig:doublepair}, we meticulously design the entanglement distillation network in a crossed structure. 
This design ensures that all same-order double-pair emission events are automatically filtered out, with only ideal events producing eight-body coincidence counts. The details are as follows.

(1) Photons t$_1$ and t$_2$, from EPR$_1$ and EPR$_4$, serve as heralding photons, ensuring that each of these two EPR sources generates exactly one photon pair.

(2) The arrangement of distillation PBSs excludes other double-pair emission events of EPR$_2$ or EPR$_3$, such as $g_{1}$-$g^{2}_{2}$-$0$-$g_{4}$ or $g_{1}$-$0$-$g^{2}_{3}$-$g_{4}$. 
Due to the symmetry of the experimental setup, we only discuss the former case here, with some typical examples listed in Table~\ref{tab:double_pair}. 
(i) When two photon pairs are produced from double-pair emission with the same polarization (e.g., $\ket{H_{b_1}H_{b_1}H_{c_1}H_{c_1}}$ or $\ket{V_{b_1}V_{b_1}V_{a_1}V_{a_1}}$), the two photons will enter the same detection terminal and cannot produce an eight-body coincidence count, such as events in rows 3-4 of Table~\ref{tab:double_pair}.
(ii) When two photon pairs are produced from double-pair emission with different polarizations (e.g., $\ket{H_{b_1}V_{b_1}H_{c_1}V_{a_1}}$), there is always one detection terminal that receives no photons, preventing an eight-body coincidence count, such as events in rows 5-8 of Table~\ref{tab:double_pair}.

The above discussion is based on the assumption that no HWPs are inserted into the light paths. 
In our experiment, we prepare the noisy GHZ state by inserting and removing HWP@$45\degree$ in the light paths of photons $a$ and $b$. 
During the generation procedure of noisy GHZ states, the filtering strategy described above remains effective.

\begin{figure}[htbp!]
	\centering
	\includegraphics[width=0.5 \linewidth]{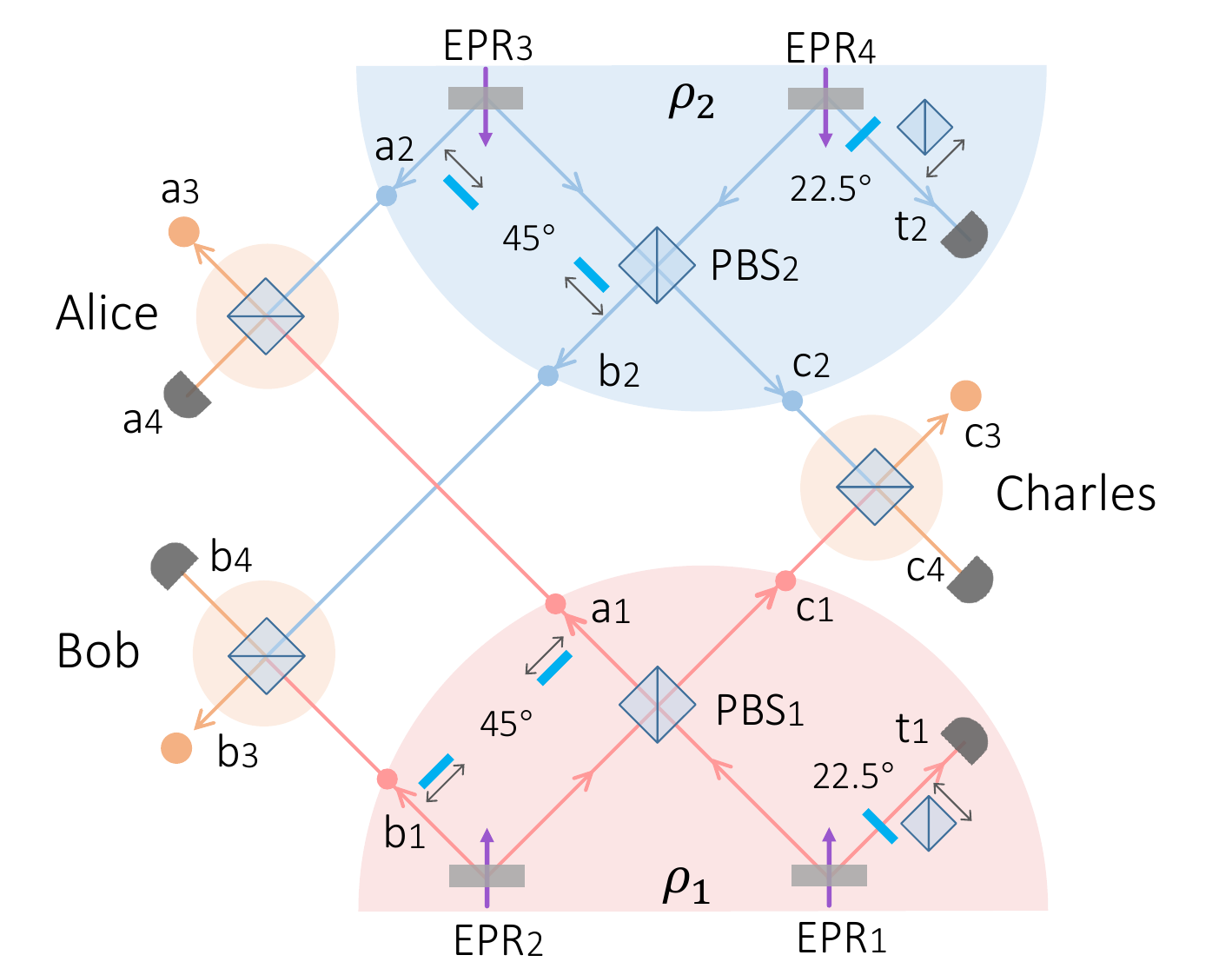}
	\caption{Schematic diagram of the entanglement distillation network.}
	\label{Fig:doublepair}
\end{figure}

\begin{table}[!htbp]
	\renewcommand{\arraystretch}{1.5}
	\normalsize 
	\centering	\begin{tabular}{c|c|c|c|c|c|c|c|c|c|c|c|c|c}
\hline
\hline
\multicolumn{5}{c|}{\textbf{SPDC process}} & \multicolumn{8}{c|}{\textbf{photon distribution}} & \textbf{eight-body}  \\ \cline{1-13} 
\textbf{same-order emission} & \textbf{EPR$_1$} & \textbf{EPR$_2$} & \textbf{EPR$_3$} & \textbf{EPR$_4$} & t$_1$ & t$_2$ & a$_3$ & a$_4$ & b$_3$ & b$_4$ & c$_3$ & c$_4$ &  \textbf{coincidence?} \\ \hline
\multirow{2}{*}{$g_{1}$-$g_{2}$-$g_{3}$-$g_{4}$}& $\ket{H_{t_1}H_{a_1}}$ & $\ket{H_{b_1}H_{c_1}}$ & $\ket{H_{a_2}H_{c_2}}$ & $\ket{H_{t_2}H_{b_2}}$ & 1 & 1 & 1 & 1 & 1 & 1 & 1 & 1 & \checkmark \\ 
\cline{2-14}
 & $\ket{V_{t_1}V_{c_1}}$ & $\ket{V_{b_1}V_{a_1}}$ & $\ket{V_{a_2}V_{b_2}}$ & $\ket{V_{t_2}V_{c_2}}$ & 1 & 1 & 1 & 1 & 1 & 1 & 1 & 1 & \checkmark \\ \hline
\multirow{6}{*}{$g_{1}$-$g^{2}_{2}$-$0$-$g_{4}$} & $\ket{H_{t_1}H_{a_1}}$ & $\ket{H_{b_1}H_{b_1}H_{c_1}H_{c_1}}$ & 0 & $\ket{H_{t_2}H_{b_2}}$ & 1 & 1 & 1 & 0 & 1 & 2 & 2 & 0 & $\times$ \\ \cline{2-14}
 & $\ket{H_{t_1}H_{a_1}}$ & $\ket{H_{b_1}H_{b_1}H_{c_1}H_{c_1}}$ & 0 & $\ket{V_{t_2}V_{c_2}}$ & 1 & 1 & 1 & 0 & 0 & 2 & 3 & 0 & $\times$ \\ \cline{2-14}
 & $\ket{H_{t_1}H_{a_1}}$ & $\ket{H_{b_1}V_{b_1}H_{c_1}V_{a_1}}$ & 0 & $\ket{H_{t_2}H_{b_2}}$ & 1 & 1 & 1 & 1 & 2 & 1 & 1 & 0 & $\times$ \\ \cline{2-14}
 & $\ket{V_{t_1}V_{c_1}}$ & $\ket{H_{b_1}V_{b_1}H_{c_1}V_{a_1}}$ & 0 & $\ket{V_{t_2}V_{c_2}}$ & 1 & 1 & 0 & 1 & 1 & 1 & 2 & 1 & $\times$ \\ \cline{2-14}
 & $\ket{H_{t_1}H_{a_1}}$ & $\ket{H_{b_1}V_{b_1}H_{c_1}V_{a_1}}$ & 0 & $\ket{V_{t_2}V_{c_2}}$ & 1 & 1 & 1 & 1 & 1 & 1 & 2 & 0 & $\times$ \\ \cline{2-14}
 & $\ket{V_{t_1}V_{c_1}}$ & $\ket{H_{b_1}V_{b_1}H_{c_1}V_{a_1}}$ & 0 & $\ket{H_{t_2}H_{b_2}}$ & 1 & 1 & 0 & 1 & 2 & 1 & 1 & 1 & $\times$ \\ 
\hline
\hline
\end{tabular}
    \caption{Photon number distribution at eight detection terminals for the same-order SPDC process. 
    Here are just some typical same-order double-pair emission events. 
    Events for  $g_{1}$-$0$-$g^{2}_{3}$-$g_{4}$ are omitted due to the symmetry of the experimental setup.   
    }
    \label{tab:double_pair}
\end{table}

\subsection{Fidelity estimation}
As described in the main text, the three-qubit state we investigate is a mixture of a GHZ state and white noise. Therefore, for a given three-qubit state to be measured, its fidelity is defined with respect to the ideal GHZ state
\begin{equation}
	F =\bra{\text{GHZ}_3} \rho \ket{\text{GHZ}_3}= \text{Tr}(\rho\cdot \hat{\rho}_{\text{GHZ}}),
\end{equation}
Here, the density operator $\hat{\rho}_{\text{GHZ}}$ can be decomposed as follows:
\begin{align*}
\hat{\rho}_{\text{GHZ}} &= \ket{\text{GHZ}_3}\bra{\text{GHZ}_3} \\
&= \frac{1}{2}(\ket{HHH}\bra{HHH} + \ket{VVV}\bra{VVV}) + \frac{1}{2}(\ket{HHH}\bra{VVV} + \ket{VVV}\bra{HHH}) \\
&= \frac{1}{2}(\ket{H}\bra{H}^{\otimes 3} + \ket{V}\bra{V}^{\otimes 3}) + \frac{1}{6}\sum_{k=0}^{2}(-1)^k M_k^{\otimes 3}
\end{align*}
where the operator $M_k = \cos\left(\frac{k\pi}{3}\right)\sigma_x + \sin\left(\frac{k\pi}{3}\right)\sigma_y$. 
Thus, to evaluate the fidelity of each three-qubit state, we experimentally measure the observables $\sigma_z\sigma_z\sigma_z$, $M_0$, $M_1$, and $M_2$.

For the two-photon states after the localization operation, their fidelities are defined with respect to the Bell state $\ket{\Phi^+} = (\ket{HH} + \ket{VV})/\sqrt{2}$, and the fidelity can be estimated as
\begin{equation}
	F =\bra{\text{Bell}} \rho \ket{\text{EPR}}= \text{Tr}(\rho\cdot \hat{\rho}_{\text{EPR}}),
\end{equation}
the density operator $\hat{\rho}_{\text{EPR}}$ can be decomposed as follows:
\begin{equation}
\hat{\rho}_{\text{EPR}} = \frac{1}{2}(\ket{H}\bra{H}^{\otimes 2} + \ket{V}\bra{V}^{\otimes 2}) + \frac{1}{4}(\sigma_x^{\otimes 2} - \sigma_y^{\otimes 2})
\end{equation}
Therefore, to evaluate the fidelity of each two-qubit state, we experimentally measure the observables $\sigma_z\sigma_z$, $\sigma_x\sigma_x$, and $\sigma_y\sigma_y$.

\subsection{Tomographic Data}\label{sec:tomo}

This section presents the reconstructed density matrices of to-be-distilled states with $p=1$, $p=0.5$, and $p=0.36$. 
Fig.~\ref{Fig:tomo_1.0} can benchmark the state preparation process, like estimate parameters of $q$ and $r$ defined in Sec.~\ref{sec:SLE_GME_noisy}. Fig.~\ref{Fig:tomo_0.5} and Fig.~\ref{Fig:tomo_0.36} are employed to certify the GME and SLE superactivation phenomena, as discussed in Sec.~\ref{sec:SLE_GME_noisy}, respectively.
For each to-be-distilled state, $27$ measurement settings ($\sigma_x\sigma_x\sigma_x$, $\sigma_x\sigma_x\sigma_y$, $\sigma_x\sigma_x\sigma_z$, $\cdots$, $\sigma_z\sigma_z\sigma_z$) are applied.
Data are collected for 5 minutes for each measurement setting for Fig.~\ref{Fig:tomo_1.0}, and 3 hours for each measurement setting for Fig.~\ref{Fig:tomo_0.5} and Fig.~\ref{Fig:tomo_0.36}.

\begin{figure*}[htbp]
	\centering	\includegraphics[width=1 \linewidth]{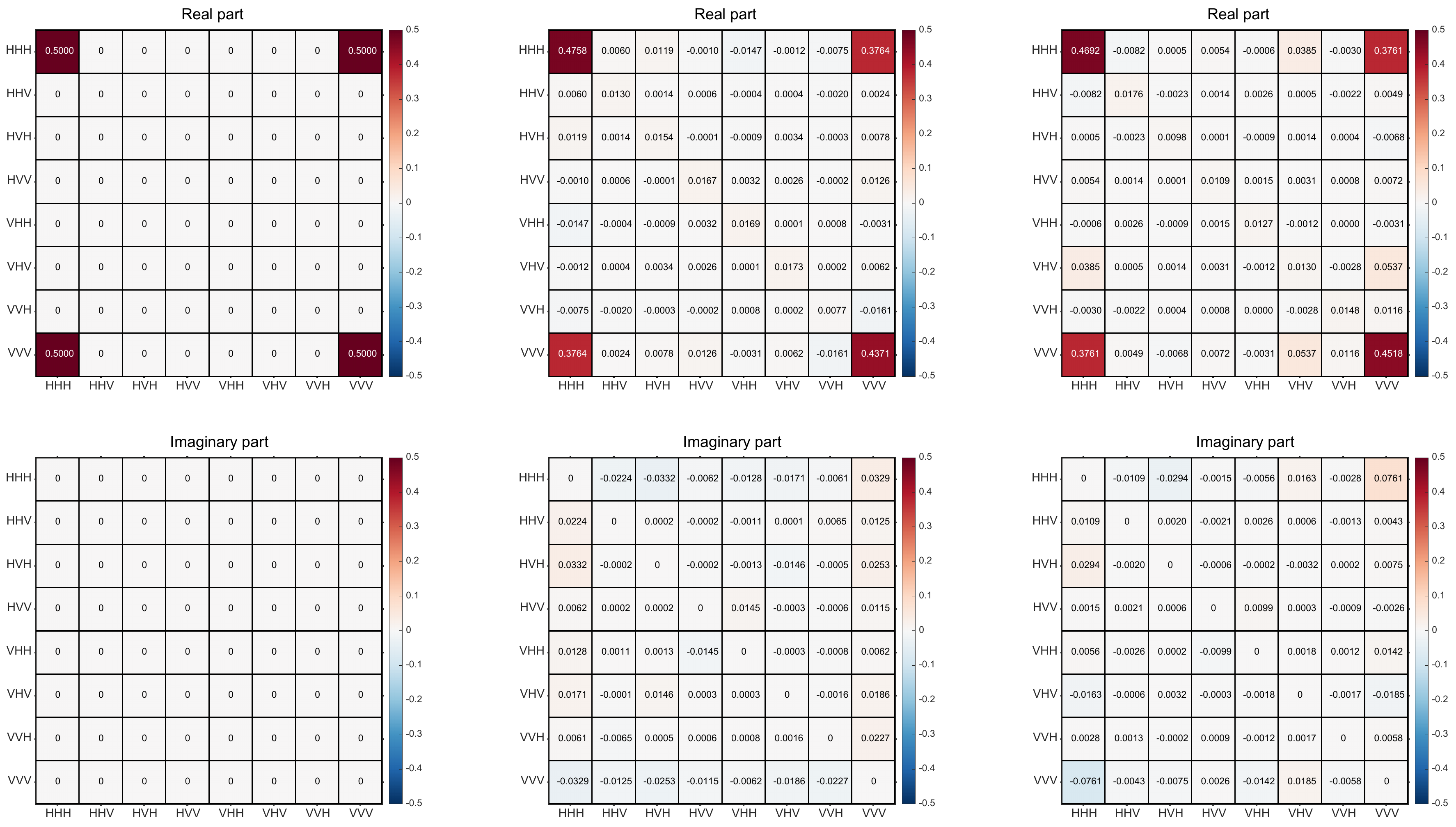}
 \caption{
\textbf{Left:} The density matrix of the noisy GHZ state with $p=1$. 
\textbf{Middle:} Experimental reconstructed density matrix of $\rho_{1}$ with $p=1$. 
\textbf{Right:} Experimental reconstructed density matrix of $\rho_{2}$ with $p=1$.
}
\label{Fig:tomo_1.0}
\end{figure*}

\begin{figure*}[htbp]
	\centering
	\includegraphics[width=1 \linewidth]{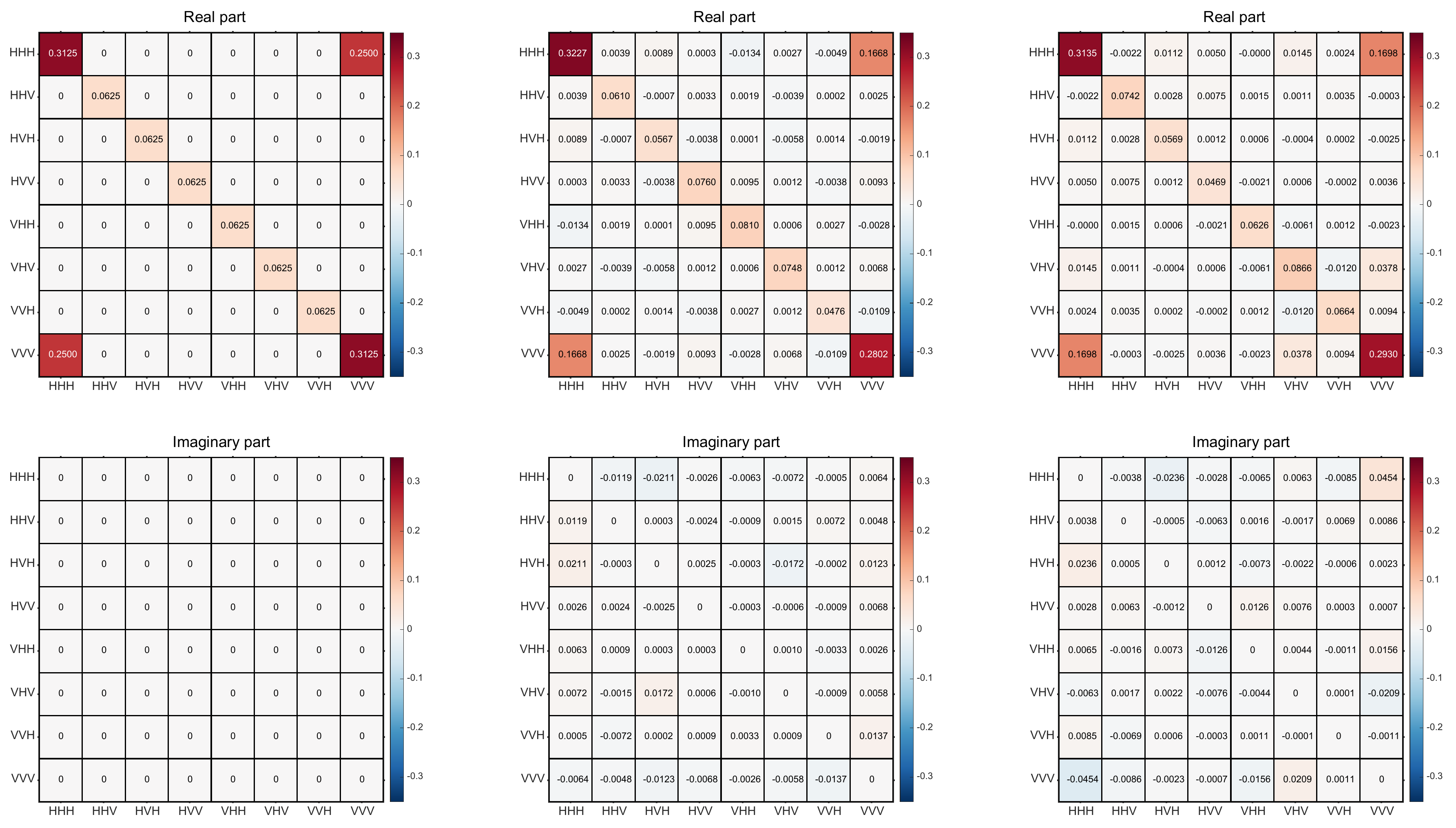}
	\caption{
\textbf{Left:} The density matrix of the noisy GHZ state with $p=0.5$. 
\textbf{Middle:} Experimental reconstructed density matrix of $\rho_{1}$ with $p=0.5$. 
\textbf{Right:} Experimental reconstructed density matrix of $\rho_{2}$ with $p=0.5$.
}
	\label{Fig:tomo_0.5}
\end{figure*}

\begin{figure*}[htbp]
	\centering
	\includegraphics[width=1 \linewidth]{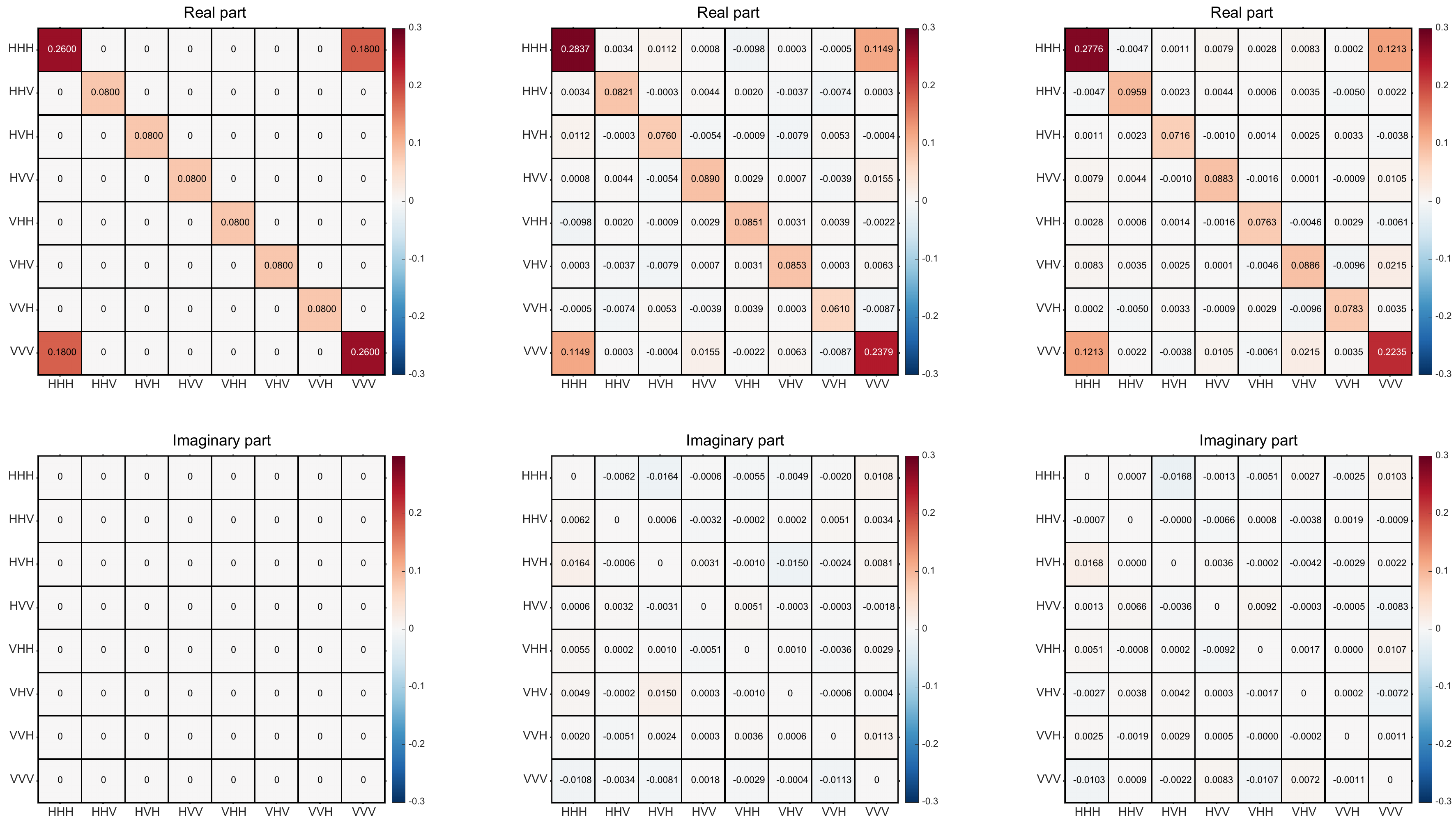}
	\caption{
\textbf{Left:} The density matrix of the noisy GHZ state with $p=0.36$. 
\textbf{Middle:} Experimental reconstructed density matrix of $\rho_{1}$ with $p=0.36$. 
\textbf{Right:} Experimental reconstructed density matrix of $\rho_{2}$ with $p=0.36$.
}
\label{Fig:tomo_0.36}
\end{figure*}

\newpage

%